\newtheorem{theorem}{Theorem}
\title{Behavioral QLTL
}
\author{
Giuseppe De Giacomo, Giuseppe Perelli
}
\affil{
	Sapienza University of Rome \\ 
	\{degiacomo, perelli\}@diag.uniroma1.it
}
\date{}
\newtheorem{definition}{Definition}
\newtheorem{lemma}{Lemma}
\newtheorem{corollary}{Corollary}
\newcommand{\ignore}[1]{}
\newcommand{\argemp}[2]{\if&#1&\else#2\fi}
\newcommand{\argdef}[2]{\if&#1&#2\else#1\fi}
\newcommand{\argint}[3]{\if&#2&\else#1#2#3\fi}
\newcommand{\argext}[3]{\if&#1&#3\else#1\if&#3&\else#2#3\fi\fi}
\newcommandx{\mthfnt}[3][1=, 2=0]{{
	\IfStrEqCase{#1}
	{%
		{}%
		{#3}%
		{Name}%
		{%
			\IfStrEqCase{#2}
			{%
				{0}{\mathcal{#3}}%
				{1}{\mathscr{#3}}%
				{2}{\mathfrak{#3}}%
				{3}{\mathbb{#3}}%
			}
			[\ensuremath{\clubsuit}]%
		}%
		{Set}%
		{%
			\IfStrEqCase{#2}
			{%
				{0}{\mathrm{#3}}%
				{1}{\mathsf{#3}}%
				{2}{\mathbb{#3}}%
				{3}{\mathbf{#3}}%
			}
			[\ensuremath{\clubsuit}]%
		}%
		{Fun}%
		{%
			\IfStrEqCase{#2}
			{%
				{0}{\mathsf{#3}}%
				{1}{\mathrm{#3}}%
			}
			[\ensuremath{\clubsuit}]%
		}%
		{Rel}%
		{%
			\IfStrEqCase{#2}
			{%
				{0}{\mathit{#3}}%
				{1}{\mathtt{#3}}%
			}
			[\ensuremath{\clubsuit}]%
		}%
		{Sym}%
		{%
			\IfStrEqCase{#2}
			{%
				{0}{\mathtt{#3}}%
				{1}{\mathbf{#3}}%
			}
			[\ensuremath{\clubsuit}]%
		}%
		{Elm}%
		{\mathnormal{#3}}
	}
[\ensuremath{\clubsuit}]%
}}
\newcommand{\mthsub}[1]{\argemp{#1}{\ensuremath{_{\mathnormal{#1}}}}}
\newcommand{\mthsup}[1]{\argemp{#1}{\ensuremath{^{\mathnormal{#1}}}}}
\newcommandx{\mth}[5][1=, 2=0, 4=, 5=]{{\ensuremath{\mthfnt[#1][#2]{#3}\mthsub{#4}\mthsup{#5}}}}
\newcommandx{\mtharg}[6][1=, 2=0, 4=, 5=]{{\mth[#1][#2]{#3}[#4][#5]\ensuremath{\argint{(}{#6}{)}}}}
\newcommand{\mthempty}{\mth[][]}
\newcommand{\mthstyname}{0}
\newcommand{\mthname}[1][]{\mth[Name][\argdef{#1}{\mthstyname}]}
\newcommand{\mthstyset}{0}
\newcommand{\mthset}[1][]{\mth[Set][\argdef{#1}{\mthstyset}]}
\newcommand{\mthstyfun}{0}
\newcommand{\mthfun}[1][]{\mth[Fun][\argdef{#1}{\mthstyfun}]}
\newcommand{\mthstysym}{0}
\newcommand{\mthsym}[1][]{\mth[Sym][\argdef{#1}{\mthstysym}]}
\newcommand{\tuple}[1]
{\ensuremath{\!\argint{\langle}{#1}{\rangle}}}
			\newcommandx{\AName}[4][1=, 2=, 3=, 4=]{\mthname[#4]{A#3}[#1][#2]}
			\newcommandx{\BName}[4][1=, 2=, 3=, 4=]{\mthname[#4]{B#3}[#1][#2]}
			\newcommandx{\CName}[4][1=, 2=, 3=, 4=]{\mthname[#4]{C#3}[#1][#2]}
			\newcommandx{\DName}[4][1=, 2=, 3=, 4=]{\mthname[#4]{D#3}[#1][#2]}
			\newcommandx{\EName}[4][1=, 2=, 3=, 4=]{\mthname[#4]{E#3}[#1][#2]}
			\newcommandx{\FName}[4][1=, 2=, 3=, 4=]{\mthname[#4]{F#3}[#1][#2]}
			\newcommandx{\GName}[4][1=, 2=, 3=, 4=]{\mthname[#4]{G#3}[#1][#2]}
			\newcommandx{\HName}[4][1=, 2=, 3=, 4=]{\mthname[#4]{H#3}[#1][#2]}
			\newcommandx{\IName}[4][1=, 2=, 3=, 4=]{\mthname[#4]{I#3}[#1][#2]}
			\newcommandx{\JName}[4][1=, 2=, 3=, 4=]{\mthname[#4]{J#3}[#1][#2]}
			\newcommandx{\KName}[4][1=, 2=, 3=, 4=]{\mthname[#4]{K#3}[#1][#2]}
			\newcommandx{\LName}[4][1=, 2=, 3=, 4=]{\mthname[#4]{L#3}[#1][#2]}
			\newcommandx{\MName}[4][1=, 2=, 3=, 4=]{\mthname[#4]{M#3}[#1][#2]}
			\newcommandx{\NName}[4][1=, 2=, 3=, 4=]{\mthname[#4]{N#3}[#1][#2]}
			\newcommandx{\OName}[4][1=, 2=, 3=, 4=]{\mthname[#4]{O#3}[#1][#2]}
			\newcommandx{\PName}[4][1=, 2=, 3=, 4=]{\mthname[#4]{P#3}[#1][#2]}
			\newcommandx{\QName}[4][1=, 2=, 3=, 4=]{\mthname[#4]{Q#3}[#1][#2]}
			\newcommandx{\RName}[4][1=, 2=, 3=, 4=]{\mthname[#4]{R#3}[#1][#2]}
			\newcommandx{\SName}[4][1=, 2=, 3=, 4=]{\mthname[#4]{S#3}[#1][#2]}
			\newcommandx{\TName}[4][1=, 2=, 3=, 4=]{\mthname[#4]{T#3}[#1][#2]}
			\newcommandx{\UName}[4][1=, 2=, 3=, 4=]{\mthname[#4]{U#3}[#1][#2]}
			\newcommandx{\VName}[4][1=, 2=, 3=, 4=]{\mthname[#4]{V#3}[#1][#2]}
			\newcommandx{\WName}[4][1=, 2=, 3=, 4=]{\mthname[#4]{W#3}[#1][#2]}
			\newcommandx{\XName}[4][1=, 2=, 3=, 4=]{\mthname[#4]{X#3}[#1][#2]}
			\newcommandx{\YName}[4][1=, 2=, 3=, 4=]{\mthname[#4]{Y#3}[#1][#2]}
			\newcommandx{\ZName}[4][1=, 2=, 3=, 4=]{\mthname[#4]{Z#3}[#1][#2]}
\newcommand{\defeq}{\stackrel{\triangle}{=}}
\renewcommand{\defeq}{\doteq}
\newcommand{\set}[2]
{\ensuremath{\argint{\{}{\argext{#1}{\allowbreak \mid \allowbreak}{#2}}{\}}}}
\newcommand{\card}[1]
{\mthempty{\argint{\vert}{#1}{\vert}}}
\newcommand{\rst}{\upharpoonright}
\newcommand{\SetN}
{\mthset[2]{N}}
\newcommand{\LTL}{\textsc{ltl}\xspace}
\newcommand{\FOL}{\textsc{fol}\xspace}
\newcommand{\MSO}{\textsc{mso}\xspace}
\newcommand{\QLTL}{{\textsc{qltl}}\xspace}
\newcommand{\QPTL}{{\textsc{qptl}}\xspace}
\newcommand{\BQLTL}{\textsc{qltl$_{\mthfun{B}}$}\xspace}
\newcommand{\WBQLTL}{\textsc{qltl$_{\mthfun{WB}}$}\xspace}
\newcommand{\SL}{{\textsc{sl}}\xspace}
\newcommand{\ATLS}{{\textsc{atl$^{\star}$}}\xspace}
\newcommand{\Tree}{\TName}
\newcommand{\dir}{\mthfun{dir}}
\newcommand{\hide}{\mthfun{hide}}
\newcommand{\xray}{\mthfun{xray}}
\newcommand{\shape}{\mthfun{shape}}
\newcommand{\change}{\mthfun{change}}
\newcommand{\child}{\mthfun{child}}
\newcommand{\ndet}{\mthfun{ndet}}
\newcommand{\DPW}{\textsc{dpw}\xspace}
\newcommand{\X}{\mthsym{X}}
\newcommand{\F}{\mthsym{F}}
\newcommand{\G}{\mthsym{G}}
\newcommand{\U}{\mthsym{U}}
\newcommand{\R}{\mthsym{R}}
\newcommand{\true}{\mthsym{true}\xspace}
\newcommand{\false}{\mthsym{false}\xspace}
\newcommand{\cmodels}{\models_{\mthsym{C}}}
\newcommand{\smodels}{\models_{\mthsym{S}}}
\newcommand{\bmodels}{\models_{\mthsym{B}}}
\newcommand{\wbmodels}{\models_{\mthsym{WB}}}
\newcommand{\free}{\mthfun{free}}
\newcommand{\qntPref}{\mthsym{\wp}}
\newcommand{\qnt}{\mthsym{Qn}}
\newcommand{\Automaton}{\AName}
\newcommand{\NAutomaton}{\NName}
\newcommand{\DAutomaton}{\DName}
\newcommand{\Bool}{\mathcal{B}}
\newcommand{\Lang}{\LName}
\newcommand{\strategy}{\mthfun{s}}
\newcommand{\Var}{\mthsym{Var}}
\newcommand{\Dep}{\mthfun{Dep}}
\renewcommand{\Game}{\GName}
\newcommand{\Players}{\mthsym{Pl}}
\newcommand{\Actions}{\mthsym{Ac}}
\newcommand{\States}{\mthsym{St}}
\newcommand{\trnFun}{\mthfun{tr}}
\newcommand{\singleton}{\mthsym{singleton}}
\definecolor{almond}{rgb}{0.94, 0.87, 0.8}
\definecolor{beige}{rgb}{0.96, 0.96, 0.86}
\tikzstyle{every node} =
\tikzstyle{every edge} +=
\tikzstyle{noall} =
\tikzstyle{nodraw} =
\tikzstyle{nofill} =
\tikzstyle{cnode} =
\tikzstyle{snode} =
\tikzstyle{lnode} =
\tikzstyle{pnode} =
\begin{document}

\maketitle

\begin{abstract}
  In this paper we introduce  Behavioral QLTL, which is a ``behavioral'' variant of
  linear-time temporal logic on infinite traces with second-order
  quantifiers. Behavioral QLTL is characterized by the fact
  that the functions that assign the truth value of the quantified
  propositions along the trace can only depend on the past. In other
  words such functions must be``processes''. This gives to the logic a
  strategic flavor that we usually associate to planning. Indeed we
  show that temporally extended planning in nondeterministic domains,
  as well as LTL synthesis, are expressed in Behavioral QLTL through
  formulas with a simple quantification alternation. While, as this
  alternation increases, we get to forms of planning/synthesis in
  which conditional and conformant planning aspects get mixed.  We
  study this logic from the computational point of view and compare it
  to the original QLTL (with non-behavioral semantics) and with
  simpler forms of behavioral semantics.
\end{abstract}

\section{Introduction}
\label{sec:introduction}

Since the very early time of AI, researchers have tried to reduce planning to logical reasoning, i.e., satisfiability, validity, logical implication \cite{Green69}.
However as we consider more and more sophisticated forms of planning this becomes more and more challenging, because the logical reasoning we need to do is intrinsically second-order.
One prominent case is if we want to express the model of the world (aka the environment) and the goal of the agent directly in Linear-time Temporal Logic, which is the logic used most in formal method to specify dynamic systems.
Examples are the pioneering work on using temporal logic as a sort of programming language through the MetateM framework~\cite{BFGGO95}, the work on temporal extended goals and declarative control constraints~\cite{BK98,BK00}, the work on planning via model-checking~\cite{CGGT97,CR00,DTV99,BCR01}, the work on adopting \LTL logical reasoning (plus some meta-theoretic manipulation) for certain forms of planning~\cite{CM98,CDV02}.
More recently the connection between planning in nondeterministic domains and (reactive) synthesis \cite{PR89} has been investigated, and in fact it has been shown that planning in nondeterministic domains can be seen in general terms as a form of synthesis in presence of a model of the environment~\cite{CBM19,AminofGMR19}, also related to synthesis under assumptions~\cite{CH07,ChatterjeeHJ08}. 

However the connection between planning and synthesis also clarifies formally that we cannot use directly the standard forms of reasoning in \LTL, such as satisfiability, validity, or  logical implication, to do planning.
Indeed the logical reasoning task we have to adopt is a nonstandard one, called “\emph{realizability}” \cite{Chu63, PR89}, which is in inherently a second-order form of reasoning on \LTL specifications.
So one question comes natural: can we use the second-order version of \LTL, called \QLTL (or \QPTL) \cite{Sis85} and then avoid use nonstandard form of reasoning?

In~\cite{CDV02} a positive answer was given limited to conformant planning, in which we cannot observe response of the environment to the agent actions. Indeed it was shown that conformant planning could be captured through standard logical reasoning in \QLTL.  But the results there do not extend to conditional planning (with or without full observability) in nondeterministic environment models.
The reason for this is very profound.
Any plan must be a ``\emph{process}’’, i.e., observe what has happened so far (the history), observe the current state and take a decision on the next action to do \cite{ALW89}. \QLTL instead interprets quantified propositions (i.e.,  in the case of planning, the actions to be chosen) through functions that have access to the whole traces, i.e., also the future instants, hence they cannot be considered processes.
This is a clear mismatch that makes standard \QLTL unsuitable to capture planning through standard reasoning tasks.

This mismatch is not only a characteristic of \QLTL, but, interestingly, even of logics that have been introduced specifically for in strategic reasoning. This has lead to investigating the ``\emph{behavioral}'' semantics in these logics.
In their seminal work~\cite{MMPV14}, Mogavero et al. introduce and analyze the behavioral aspects of quantification in Strategy Logic (\SL): a logic for reasoning about the strategic behavior of agents in a context where the properties of executions are expressed in \LTL.
They show that restricting to behavioral quantification of strategies is a way of both making the semantics more realistic and computationally easier.
In addition, they proved that behavioral and non-behavioral semantics coincide for certain fragments, including the well known \ATLS~\cite{AHK02}, but diverge for more interesting classes of formulas, e.g., the ones that can express game-theoretic properties such as Nash Equilibria and the like.
This has started a new line of research that aims at identifying new notions of behavioral and non-behavioral quantification, as well as characterize the syntactic fragments that are invariant to these semantic variations~\cite{GBM18,GBM20}.

In this paper we introduce a behavioral semantics for \QLTL. The resulting logic, called \emph{Behavioral-\QLTL} (\BQLTL)
is characterized by the fact that the functions that assign the truth value of the quantified propositions along the trace can only depend on the past.
In other words such functions must be ``\emph{processes}''.
This makes \BQLTL perfectly suitable to capture extended forms of planning through standard reasoning tasks (satisfiability in particular).

Indeed, temporally extended planning in nondeterministic domains, as well as \LTL synthesis, are expressed in \BQLTL through formulas with a simple quantification alternation.
While, as this alternation increases, we get to forms of planning/synthesis in which conditional and conformant planning aspects get mixed.
For example, the \BQLTL formula of the form $\exists Y \forall X \psi$ represents the conformant planning over the \LTL specification (of both environment model and goal) $\psi$, as it is intended in~\cite{Rin04} (note that this could be done also with standard \QLTL, since $\exists Y$ is put upfront as it cannot depend on the nondeterministic evolution of the fluents in the planning domain).  Instead, the \BQLTL formula $\forall X \exists Y \psi$ represents contingent planning, i.e., \emph{Planning in Fully Observable Nondeterministic Domains} (FOND), as well as \LTL synthesis (which, instead, could not be captured in standard \QLTL).  By taking \BQLTL formulas with increased alternation, one can describe more complex forms of planning and synthesis.  The \BQLTL formula $\forall X_1 \exists Y \forall X_2 \varphi$ represents the problem of \emph{Planning in Partially Observable Nondeterministic Domains} (POND), where $X_1$ and $X_2$ are the visible and hidden parts of the domain, respectively.  By going even further in alternation, we get a generalized form of POND where a number of actuators with hierarchically reduced visibility are coordinated to execute a plan that fulfills a temporally extended goal in an environment model. Interestingly this instantiates problems of distributed synthesis with hierarchical information studied in formal methods~\cite{PR90,KV01,FS05}.

 

%
%
%
%
%
%

We study \BQLTL, by introducing a formal semantics that is \emph{Skolem-based}, meaning that we make use of different notions of Skolem functions and Skolemization to define the truth-value of formulas.
The advantage of this approach is in the correspondence between Skolem functions and strategies/plans in synthesis and planning problems.
As a matter of fact, they can all be represented as suitable labeled trees, describing all the possible executions of a given process that receive inputs from the environment.
We show characterize the complexity of satisfiability in \BQLTL 
 is $(n+1)$-EXPTIME-complete, with $n$ being the number of quantification blocks of the form $\forall X_i\exists Y_i$ the in the formula.
This improves the complexity of the satisfiability problem for classic \QLTL, which depends on the overall quantifier alternation in the formula, and in particular is $2(n-1)$-EXSPACE-complete.
Moreover, it also shows that the corresponding synthesis and planning problems can be optimally solved in \BQLTL, as the matching lower-bound is provided by a reduction of these problems.

We also consider a weak variant of \BQLTL, called \emph{Weak Behavioral-\QLTL} (\WBQLTL), where the history is always visible while we have restriction visibility on the curent instant only.  We show that the complexity of satisfiability in \WBQLTL is $2$-EXPTIME-complete, regardless of the number and alternation of quantifiers.
The reason for this is in that processes are modeled in a way that they have full visibility on the past computation.
This allows them to find the right plan by means of a local reasoning, and so without employing computationally expensive automata projections.
As for the case of \BQLTL, such procedure is optimal to solve the corresponding synthesis problems, as the matching lower-bound is again provided by a reduction of them.

\section{Quantified Linear-Time Temporal Logic}
\label{sec:QLTL}

We introduce \emph{Quantified Linear-Temporal Logic} as an extension of \emph{Linear-Time Temporal Logic}.

\paragraph{Linear-Time Temporal Logic}
Linear Temporal Logic (\LTL) over infinite traces was originally
proposed in Computer Science as a specification language for
concurrent programs~\cite{Pnu77}.
Formulas of \LTL are built from a set $\Var$ of \emph{propositional variables} (or simply variables), together with Boolean and temporal operators.
Its syntax can be described as follows:

\begin{center}
	$\varphi ::= x \mid \neg \varphi \mid \varphi \vee \varphi \mid \varphi \wedge \varphi \mid \X \varphi \mid \varphi \U \varphi$
\end{center}
where $x \in \Var$ is a propositional variable.

Intuitively, the formula $\X \varphi$ says that $\varphi$ holds at the \emph{next} instant.
Moreover, the formula $\varphi_1 \U \varphi_2$ says that at some future instant $\varphi_2$ holds and \emph{until} that point, $\varphi_1$ holds.

We also use the standard Boolean abbreviations $\true:= x \vee \neg x$ (\emph{true}), $\false:= \neg \true$ (\emph{false}), and $\varphi_1 \to \varphi_2 := \neg \varphi_{1} \vee \varphi_{2}$ (\emph{implies}).
In addition, we also use the binary operator $\varphi_{1} \R \varphi_{2} \defeq \neg (\neg \varphi_{1} \U \neg \varphi_{2})$ (\emph{release}) and the unary operators $\F \varphi := \true \U \varphi$ (\emph{eventually}) and $\G \varphi := \neg \F \neg \varphi$ (\emph{globally}).

The classic semantics of \LTL is given in terms of infinite traces, i.e., truth-values over the natural numbers.
More precisely, an \emph{interpretation} $\pi: \SetN \to 2^{\Var}$ is a function that maps each natural number $i$ to a truth assignment $\pi(i) \in 2^{\Var}$ over the set of variables $\Var$.
Along the paper, we might refer to finite \emph{segments} of a computation $\pi$.
More precisely, for two indexes $i, j \in \SetN$, by $\pi(i,j) \defeq \pi(i), \ldots, \pi(j) \in (2^{\Var})^{*}$ we denote the finite segment of $\pi$ from it's $i$-th to its $j$-th position.
A segment $\pi(0,j)$ starting from $0$ is also called a \emph{prefix} and is sometimes denoted $\pi_{\leq j}$.

We say that an \LTL formula $\varphi$ is true on an assignment $\pi$ at instant $i$, written $\pi, i \cmodels \varphi$, if:

\begin{itemize}[label={-},leftmargin=*]
	\setlength\itemsep{0em}
	\item
		$\pi, i \cmodels x$, for $x \in \Var$ iff $x \in \pi(i)$;
	
	\item
		$\pi, i \cmodels \neg \varphi$ iff  $\pi, i \not \cmodels \varphi$;
		
	\item
		$\pi, i \cmodels \varphi_1 \vee \varphi_2$ iff either $\pi, i \cmodels \varphi_1$ or $\pi, i \cmodels \varphi_2$;
	
	\item
		$\pi, i \cmodels \varphi_1 \wedge \varphi_2$ iff both $\pi, i \cmodels \varphi_1$ and $\pi, i \cmodels \varphi_2$;
	
	\item
		$\pi, i \cmodels \X \varphi$ iff $\pi, i + 1 \cmodels \varphi$;
	
	\item
		$\pi, i \cmodels \varphi_1 \U \varphi_2$ iff for some $j \geq i$, we have that $\pi, j \cmodels \varphi_2$ and for all $k \in \{i, \ldots j - 1\}$, we have that $\pi, k \cmodels \varphi_1$.
		
\end{itemize}

A formula $\varphi$ is \emph{true} over $\pi$, written $\pi \cmodels \varphi$ iff $\pi, 0 \cmodels \varphi$.
A formula $\varphi$ is \emph{satisfiable} if it is true on some interpretation and \emph{valid} if it is true in every interpretation.


\paragraph{Quantified Linear-Time Temporal Logic}

Quantified Linear-Temporal Logic (\QLTL) is an extension of \LTL with two \emph{Second-order} quantifiers~\cite{SVW87}.
Its formulas are built using the classic \LTL Boolean and temporal operators, on top of which existential and universal quantification over variables is applied.
Formally, the syntax is given as follows:

\begin{center}
$\varphi ::= \exists x \varphi \mid \forall x \varphi \mid x \mid \neg \varphi \mid \varphi \vee \varphi \mid \varphi \wedge \varphi \mid \X \varphi \mid \varphi \U \varphi \mid \varphi \R \varphi$,
\end{center}
where $x \in \Var$ is a propositional variable.

Note that this is a proper extension of \LTL, as \QLTL has the same expressive power of \MSO~\cite{SVW87}, whereas \LTL is equivalent to \FOL~\cite{GPSS80}.

In order to define the semantics of \QLTL, we introduce some notation.
For an interpretation $\pi$ and a set of variables $X \subseteq \Var$, by $\pi_{\rst X}$ we denote the \emph{projection} interpretation over $X$ defined as $\pi_{\rst X}(i) \defeq \pi(i) \cap X$  at any time point $i \in \SetN$.
Moreover, by $\pi_{\rst -X} \defeq \pi_{\rst \Var \setminus X}$ we denote the projection interpretation over the complement of $X$.
For a single variable $x$, we simplify the notation as $\pi_{\rst x} \defeq \pi_{\rst \{ x \}}$ and $\pi_{\rst -x} \defeq \pi_{\rst \Var \setminus \{ x \}}$.
Finally, we say that $\pi$ and $\pi'$ \emph{agree} over $X$ if $\pi_{\rst X} = \pi'_{\rst X}$.

Observe that we can reverse the projection operation by combining interpretations over disjoint sets of variables.
More formally, for two disjoint sets $X, X' \subseteq \Var$ and two interpretations $\pi_{X}$ and $\pi_{X'}$ over $X$ and $X'$, respectively, $\pi_{X} \Cup \pi_{X'}$ is defined as the (unique) interpretation over $X \cup X'$ such that its projections on $X$ and $X'$ correspond to $\pi_{X}$ and $\pi_{X'}$, respectively.

The \emph{classic} semantics of the quantifiers in a \QLTL formula $\varphi$ over an interpretation $\pi$, at instant $i$, denoted $\pi, i \cmodels \varphi$, is defined as follows:

\begin{itemize}[label={-},leftmargin=*]
	\setlength\itemsep{0em}
	\item
		$\pi, i \cmodels \exists x \varphi$ iff there exists an interpretation $\pi'$ such that $\pi_{\rst -x} = \pi'_{\rst -x}$ and $\pi', i \cmodels \varphi$;
		
	\item
		$\pi, i \cmodels \forall x \varphi$ iff for every interpretation $\pi'$ such that $\pi_{\rst -x} = \pi'_{\rst -x}$, it holds that $\pi', i \cmodels \varphi$;
\end{itemize}

A variable $x$ is \emph{free} in $\varphi$ if it occurs at least once out of the scope of either $\exists x$ or $\forall x$ in $\varphi$.
By $\free(\varphi)$ we denote the set of free variables in $\varphi$.

As for \LTL, we say that $\varphi$ is true on $\pi$, and write $\pi \cmodels \varphi$ iff $\pi, 0 \cmodels \varphi$.
Analogously, a formula $\varphi$ is \emph{satisfiable} if it is true on some interpretation $\pi$, whereas it is \emph{valid} if it is true on every possible interpretation $\pi$.
Note that, as quantifications in the formula replace the interpretation over the variables in their scope, we can assume that $\pi$ are interpretations over the set $\free(\varphi)$ of free variables in $\varphi$.

A \QLTL formula is in \emph{prenex normal form} if it is of the form $\qntPref \psi$, where $\qntPref = \qnt_{1} x_{1} \ldots \qnt_{n} x_{n}$ is a \emph{prefix quantification} with $\qnt_{i} \in \{\exists, \forall \}$ and $x_i$ being a variable occurring on a \emph{quantifier-free} subformula $\psi$, which can be regarded as \LTL.
Every \QLTL formula can be rewritten in prenex normal form, meaning that it is true on the same set of interpretations.
Consider for instance the formula $\G (\exists y (y \wedge \X \neg y))$.
This is equivalent to $\forall x \exists y (\singleton(x) \to (\G(x \to (y \wedge \X \neg y))))$, with $\singleton(x) \defeq \F x \wedge \G(x \to \X \G \neg x)$ expressing the fact that $x$ is true exactly once on the trace~\footnote{The reader might observe that pushing the quantification over $y$ outside the temporal operator does not work. Indeed, the formula $\exists y \G(y \wedge \X \neg y)$ is unsatisfiable.}.
A full proof of the reduction to prenex normal form can be derived from~\cite[Section 2.3]{Tho97}.
For convenience and without loss of generality, from now on we will assume that \QLTL formulas are always in prenex normal form.
Recall that for a formula $\varphi = \qntPref \psi$ is easy to obtain the prefix normal form of its negation $\neg \varphi$ as $\overline{\qntPref} \neg \psi$, where $\overline{\qntPref}$ is obtained from $\qntPref$ by swapping every quantification from existential to universal and vice-versa.
From now on, by $\neg \varphi$ we denote its prenex normal form transformation.

An \emph{alternation} in a quantification prefix $\qntPref$ is either a sequence $\exists x \forall y$ or a sequence $\forall x \exists y$ occurring in $\qntPref$.
A formula of the form $\qntPref \psi$, is of \emph{alternation-depth} $k$ if $\qntPref$ contains exactly $k$ alternations.
By $k$-\QLTL we denote the \QLTL fragment of formulas with alternation $k$.
Moreover, $\Sigma_{k}^{\QLTL}$ and $\Pi_{k}^{\QLTL}$ denote the fragments of $k$-\QLTL of formulas starting with an existential and a universal quantification, respectively.

It is convenient to make use of the syntactic shortcuts $\exists X \varphi \defeq \exists x_{1} \ldots \exists x_{k} \varphi$ and $\forall X \varphi \defeq \forall x_{1} \ldots \forall x_{k} \varphi$ with $X = \{x_1, \ldots, x_k\}$.
Formulas can then be written in the form $\qnt_{1} X_1 \ldots \qnt_{n} X_n \psi$ such that every two consecutive occurrences of quantifiers are in alternation, that is, $\qnt_{i} = \exists$ iff $\qnt_{i + 1} = \forall$, for every $i \leq n$.


The satisfiability problem consists into, given a \QLTL formula $\varphi$, determine whether it is satisfiable or not.
Note that every formula $\varphi$ is satisfiable if, and only if, $\exists \free(\varphi) \varphi$ is satisfiable.
This means that we can study the satisfiability problem in \QLTL for \emph{closed} formulas, i.e., formulas where every variable is quantified.

Such problem is decidable, though computationally highly intractable in general~\cite{SVW87}.
For a given natural number $k$, by $k$-EXPSPACE we denote the language of problems solved by a Turing machine with space bounded by $2^{2^{\ldots^{2^{n}}}}$, where the height of the tower is $k$ and $n$ is the size of the input.
By convention $0$-EXPSPACE denotes PSPACE.

\begin{theorem}[\cite{Sis85}]
	\label{thm:qltlsatisfiability}
	The satisfiability problem for $k$-\QLTL formulas is $k$-EXPSPACE-complete.
\end{theorem}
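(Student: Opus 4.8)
The plan is to prove both directions of the completeness claim via reductions through automata-theoretic constructions, following the classical route from \cite{SVW87} but tracking the exact tower height.

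For the \textbf{upper bound}, I would proceed by structural induction on the quantifier prefix, maintaining as invariant the following statement: for a $k$-\QLTL formula $\varphi = \qntPref\psi$ over free variables $Z$, one can construct a nondeterministic B\"uchi automaton $\Automaton_\varphi$ over the alphabet $2^{Z}$ accepting exactly the interpretations $\pi$ with $\pi \cmodels \varphi$, whose size is $k$-exponential in $|\varphi|$. The base case ($k=0$, i.e.\ \LTL) is the standard Vardi--Wolper construction giving a single-exponential \NBW. For the inductive step, given $\qnt_1 X_1 \qntPref'\psi$ with $\qntPref'\psi$ of alternation depth $k-1$: by induction hypothesis I have an \NBW of $(k-1)$-exponential size for $\qntPref'\psi$ (over free variables including $X_1$); existential quantification $\exists X_1$ is handled by projecting the alphabet onto $2^{Z}$, which keeps the \NBW at the same size; universal quantification $\forall X_1$ is handled by complementation (Safra/determinization plus dualization), which costs one more exponential, then projection. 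Since consecutive same-type quantifiers coalesce into a single block and only a genuine alternation forces a complementation, exactly $k$ complementations occur, yielding a $k$-exponential automaton; nonemptiness is then checked in time polynomial in its size, giving $k$-EXPSPACE (with $0$-EXPSPACE $=$ PSPACE recovering the \LTL case). One should be careful that the reduction to prenex normal form (cited from \cite{Tho97}) does not blow up the alternation depth beyond $k$, and that the $\singleton$-style encoding used there is harmless; I would remark on this explicitly.

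For the \textbf{lower bound}, I would reduce from the acceptance problem for a Turing machine working in $k$-exponential space. The standard technique encodes a space-bounded computation as a trace, using nested quantifier blocks to ``address'' the $2^{2^{\cdots}}$ tape cells: one level of quantification lets a formula compare cell contents at positions that are exponentially far apart by guessing and verifying binary position counters, and each additional alternation squares the addressable range. Concretely, I would exhibit a \QLTL formula of alternation depth $k$, polynomial in the size of the machine and input, that is satisfiable iff the machine accepts; the construction is the one underlying \cite{Sis85}, and I would either reproduce its key gadget (the ``yardstick'' formulas $\psi_j$ that force a counter of height $j$ to be implemented correctly) or cite it. Membership of the reduction in the right complexity class and correctness of the yardstick formulas are routine once the gadget is set up.

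The \textbf{main obstacle} is the lower bound bookkeeping: making sure the alternation depth of the reduction formula is \emph{exactly} $k$ and not $k+1$ or $k-1$, i.e.\ that the counter/yardstick construction is ``tight'' with respect to alternations. This is where the precise matching of the tower height to the number of alternations lives, and it is the delicate part of \cite{Sis85}; for the upper bound the analogous subtlety is ensuring that projection (for $\exists$) truly does not add an exponential and that runs of equal quantifiers genuinely cost nothing. Since this theorem is quoted from \cite{Sis85}, I would present the argument at the level of a proof sketch, citing that paper for the intricate counter machinery and giving full detail only for the automata-theoretic upper-bound induction.
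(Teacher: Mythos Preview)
The paper does not prove this theorem at all: it is stated with a citation to \cite{Sis85} and no argument is given. So there is no ``paper's own proof'' to compare your proposal against; you have supplied a sketch where the paper simply quotes a known result. Your outline is essentially the classical Sistla--Vardi--Wolper route (automaton for the \LTL matrix, projection for existential blocks, complementation for each alternation, emptiness at the end; yardstick encodings of tower-of-exponential space Turing machines for the lower bound), and it is appropriate here.

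One genuine slip to fix in the upper bound: you write that nonemptiness ``is then checked in time polynomial in its size, giving $k$-EXPSPACE.'' Polynomial time in a $(k{+}1)$-fold exponential automaton only yields a $(k{+}1)$-EXPTIME bound, which is not known to coincide with $k$-EXPSPACE. The correct accounting is that \NBW emptiness is in NLOGSPACE in the size of the automaton, so the whole procedure can be carried out on-the-fly in space logarithmic in the $(k{+}1)$-fold exponential state space, i.e.\ in $k$-EXPSPACE. Relatedly, make explicit that each \emph{alternation} (not each universal block) costs exactly one complementation because you may start from the automaton for either $\psi$ or $\neg\psi$ depending on the innermost quantifier; your parenthetical about coalescing same-type quantifiers already gestures at this, but the earlier sentence ``universal quantification is handled by complementation \ldots\ then projection'' taken literally would give two complementations per universal block. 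Finally, your worry about prenex normal form is moot for this statement: $k$-\QLTL is defined in the paper directly on prenex formulas with $k$ alternations, so the input is already in the required shape.
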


\section{Skolem Functions for QLTL Semantics}
\label{sec:skolemization}

We now give an alternative way to capture the semantics of \QLTL, which is in terms of (second order) Skolem functions.
This will allow us later to suitably restrict such Skolem function to capture behavioral semantics, by forcing them to depend only form the past history and the current situation.

Let $\qntPref$ be a quantification prefix.
By $\exists(\qntPref)$ and $\forall(\qntPref)$ we denote the set of variables that are quantified existentially and universally, respectively.
Moreover, by $X <_{\qntPref} Y$ we denote the fact that $X$ occurs \emph{before} $Y$ in $\qntPref$.
For a given set of consecutive variables $Y \in \exists(\qntPref)$ that are existentially quantified, by $\Dep_{\qntPref}(Y) = \set{X \in \forall(\qntPref)}{X <_{\qntPref} Y}$ we denote the set of variables to which $Y$ depends on in $\qntPref$.
Moreover, for a given set $F \subseteq \Var$ of variables by $\Dep_{\qntPref}^{F}(Y) = F \cup \Dep_{\qntPref}(Y)$ we denote the \emph{augmented dependency}, taking into account an additional set of variables for dependency.
Whenever clear from the context, we omit the subscript and simply write $\Dep(Y)$ and $\Dep^{F}(Y)$.

The relation defined above captures the concept of \emph{functional dependence} generated by quantifiers and free variables in a \QLTL formula.
Intuitively, whenever a dependence occurs between two variables $X$ and $Y$, this means that the existential choices in $Y$ are determined by a function whose domain is given by all possible choices available in $X$, be it universally quantified or free in the corresponding formula.
This dependence is know in first-order logic as \emph{Skolem function} and can be described in \QLTL as follows.

\begin{definition}[Skolem function]
	\label{def:skolem}
	For a given quantification prefix $\qntPref$ defined over a set $\Var(\qntPref) \subseteq \Var$ of variables, and a set $F$ of variables, a function 
	
	$$
	\theta: (2^{F \cup \forall(\qntPref)})^{\omega} \to (2^{\exists(\qntPref)})^{\omega}
	$$
	
	\noindent is called Skolem function over $(\qntPref, F)$ if, for all $\pi_1, \pi_2 \in (2^{\forall(\qntPref)})^{\omega}$ and $Y \in \exists(\qntPref)$, it holds that
	$${\pi_{1}}_{\rst \Dep^{F}(Y)} = {\pi_{2}}_{\rst \Dep^{F}(Y)} \Rightarrow \theta(\pi_1)_{\rst Y} = \theta(\pi_2)_{\rst Y}\text{.}$$
\end{definition}

Informally, a Skolem function takes interpretations of the variables in $F \cup \forall(\qntPref)$ to return interpretations of the existentially quantified ones in a functional way.
Sometimes, to simplify the notation, we identify $\theta(\pi)$ with $\pi \Cup \theta(\pi)$, that is, $\theta$ \emph{extends} the interpretation $\pi$ to the existentially quantified variables of $\qntPref$.

Skolem functions can be used to define another semantics in \QLTL formulas in prenex normal form.

\begin{definition}[Skolem semantics]
	\label{dfn:skolem-semantics}
	A \QLTL formula in prenex normal form $\varphi = \qntPref \psi$ is \emph{Skolem true} over an interpretation $\pi$ at an instant $i$, written $\pi, i \smodels \varphi$, if there exists a Skolem function $\theta$ over $(\qntPref, \free(\varphi))$ such that $\theta(\pi \Cup \pi_{\forall(\qntPref)}), i \cmodels \psi$ 
\end{definition}

Intuitively, the Skolem semantics characterizes the truth of a \QLTL formula with the existence of a Skolem function that returns the interpretations of the existential quantifications in function of the variables to which they depend.

In principle, there might be formulas $\varphi$ and interpretations $\pi$ such that $\pi \smodels \varphi$ and $\pi \smodels \neg \varphi$, as the Skolem semantics require the existence of two Skolem functions that are defined over different domains, and so not necessarily inconsistent with each other.
However, as the following theorem shows, the Skolem semantics is equivalent to the classic one.
Therefore, for every formula $\varphi$ and an interpretation $\pi$, it holds that $\pi \smodels \varphi$ iff $\pi \not\smodels \neg \varphi$.

\begin{theorem}
	\label{thm:skolemization}
	
	For every \QLTL formula in prenex normal form $\varphi = \qntPref \psi$ and an interpretation $\pi \in (2^{F})^{\omega}$ over the free variables $F = \free(\varphi)$ of $\varphi$ it holds that
%
		$\pi \cmodels \varphi$ if, and only if, $\pi \smodels \varphi$
%
\end{theorem}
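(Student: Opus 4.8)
The plan is to prove both directions by induction on the length $n$ of the quantification prefix $\qntPref = \qnt_1 x_1 \ldots \qnt_n x_n$, with the base case $n = 0$ being trivial since then $\varphi = \psi$ is quantifier-free and both semantics reduce to the classic \LTL semantics $\pi \cmodels \psi$ (the empty Skolem function serving as witness). The key observation that makes the induction work is that a Skolem function over $(\qntPref, F)$ decomposes according to the outermost quantifier, and correspondingly the classic semantics of $\exists x_1$ (resp.\ $\forall x_1$) quantifies over interpretations of $x_1$; the inductive step must align these two decompositions.

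For the direction $\pi \cmodels \varphi \Rightarrow \pi \smodels \varphi$, I would split on $\qnt_1$. If $\qnt_1 = \forall$, then by the classic semantics, for every interpretation $\pi'$ agreeing with $\pi$ outside $x_1$ we have $\pi' \cmodels \qnt_2 x_2 \ldots \qnt_n x_n\, \psi$; by the induction hypothesis applied with free-variable set $F \cup \{x_1\}$, for each such $\pi'$ there is a Skolem function $\theta_{\pi'}$ over the shorter prefix witnessing Skolem truth. The point is to assemble these into a single Skolem function $\theta$ over $(\qntPref, F)$ by setting $\theta(\rho) \defeq \theta_{\rho_{\rst x_1}}(\rho)$ — here one must check that this respects the dependency constraint of Definition~\ref{def:skolem}, which holds precisely because $x_1 \in \forall(\qntPref)$ so $x_1 \in \Dep^F(Y)$ for every later existential block $Y$, hence two inputs agreeing on $\Dep^F(Y)$ in particular agree on $x_1$ and thus pick the same $\theta_{\pi'}$. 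If $\qnt_1 = \exists$, then there is one interpretation $\pi'$ agreeing with $\pi$ outside $x_1$ with $\pi' \cmodels \qnt_2 x_2 \ldots \psi$, the induction hypothesis gives a Skolem function $\theta'$ over the remaining prefix, and we prepend the constant choice $\pi'_{\rst x_1}$ to obtain the required $\theta$; since $x_1$ is existential, the value on $x_1$ may be chosen freely (constantly), so the dependency constraint is vacuous for $x_1$.

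For the converse $\pi \smodels \varphi \Rightarrow \pi \cmodels \varphi$, I would again split on $\qnt_1$ and essentially run the above argument backwards: given a Skolem function $\theta$ over $(\qntPref, F)$, if $\qnt_1 = \exists$ then $\theta$ fixes (via the constant it must assign to $x_1$, since $\Dep^F(\{x_1\}) = F$ and $\pi$ is fixed) an interpretation $\pi'$ of $x_1$, the ``tail'' of $\theta$ is a Skolem function over the shorter prefix with free set $F$, and the induction hypothesis yields $\pi' \cmodels \qnt_2 x_2 \ldots \psi$, hence $\pi \cmodels \exists x_1 \ldots$; if $\qnt_1 = \forall$ then for each interpretation $\pi'$ of $x_1$ the restriction/specialization of $\theta$ to inputs whose $x_1$-component is $\pi'_{\rst x_1}$ is a Skolem function over the shorter prefix with free set $F \cup \{x_1\}$, so the induction hypothesis gives $\pi' \cmodels \qnt_2 x_2 \ldots \psi$ for every such $\pi'$, which is exactly $\pi \cmodels \forall x_1 \ldots$. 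The main obstacle — and the step deserving the most care — is the bookkeeping around the dependency sets: one must verify that slicing a Skolem function by fixing the value of the outermost universal variable again yields a function satisfying the dependency condition \emph{with the augmented free set} $\Dep^F(\cdot) = (F \cup \{x_1\}) \cup \Dep_{\qntPref'}(\cdot)$, and, conversely, that gluing a family of such functions indexed by the $x_1$-value produces a legitimate Skolem function for the full prefix; this is where the definitions of $\Dep_{\qntPref}$ and the augmented dependency $\Dep^F$ do the real work, and it is worth stating as a small lemma about how Skolem functions over $(\qntPref, F)$ restrict and extend along the first quantifier before doing the induction.
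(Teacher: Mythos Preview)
Your proposal is correct and follows essentially the same route as the paper: induction on the length of the quantification prefix, splitting on whether the outermost quantifier is existential or universal, and in each direction gluing or slicing Skolem functions exactly as you describe (the paper even glues the family $\{\theta_{\pi_X}\}$ indexed by the outermost universal value just as you do). One small slip: in the $\exists$ case of the converse direction, the tail of $\theta$ is a Skolem function over the shorter prefix with free set $F \cup \{x_1\}$, not $F$ --- since $x_1$ becomes free once you peel off $\exists x_1$ --- and accordingly the induction hypothesis is applied at the interpretation $\pi \Cup \pi_{x_1}$; your narrative already uses $\pi'$ this way, so this is just a bookkeeping typo rather than a gap.
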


\begin{proof}
	Recall taht $\pi \smodels \varphi$ iff there exists a Skolem function $\theta$ over $(\qntPref, F)$ such that, for each interpretation $\pi' \in (2^{\forall(\qntPref)})^{\omega}$, it holds that $\theta(\pi \Cup \pi') \cmodels \psi$.
	
	The proof proceeds by induction on the length of $\qntPref$.
	For the case $\card{\qntPref} = 0$, and so $\qntPref = \epsilon$ , and so that we have that $\varphi = \psi$.
	Moreover, the only Skolem function possible is the identity function over the free variables of $\varphi$, which means that $\pi = \theta(\pi)$ and implies $\pi \cmodels \varphi$ iff $\pi \cmodels \psi$ iff $\theta(\pi) \cmodels \psi$ iff $\pi \smodels \varphi$, an so the statement holds in both directions.
	For the inductive case, we prove the two directions separately.
	
	From the left to right direction, assume that $\pi \cmodels \qntPref \psi$.
	We distinguish two cases.
	
	\begin{itemize}
		\item
		$\qntPref = \exists X \qntPref'$.
		Thus, there exists an interpretation $\pi_{X} \in (2^{X})^{\omega}$ such that $\pi \Cup \pi_{X} \cmodels \qntPref' \psi$.
		By induction hypothesis, it holds that $\pi \Cup \pi_{X} \smodels \qntPref' \psi$ and so there exists a Skolem function $\theta$ over $(\qntPref', F \cup \{X\})$ such that $\theta(\pi \Cup \pi_{X} \Cup \pi') \cmodels \psi$ for each $\pi' \in (2^{\forall(\qntPref')})^{\omega}$.
		Now, observe that $\forall(\qntPref) = \forall(\qntPref')$ and so consider the function $\theta$ is also a Skolem function over $(\qntPref, F)$.
		Hence $\theta(\pi \Cup \pi_{X} \Cup \pi') \cmodels \psi$ for every $\pi'$, which implies that $\pi \smodels \varphi$ and proves the statement.
		
		\item
		$\qntPref = \forall X \qntPref'$.
		Then, for every $\pi_{X}$, it holds that $\pi \Cup \pi_{X} \cmodels \qntPref' \psi$.
		By induction hypothesis, we have that $\pi \Cup \pi_{X} \smodels \qntPref' \psi$ and so there exists a Skolem function $\theta_{\pi_{X}}$ over $(\qntPref', F \cup \{X\})$ such that $\theta_{\pi_{X}}(\pi \Cup \pi_{X} \Cup \pi') \cmodels \psi$ for every $\pi' \in (2^{\forall(\qntPref')})^{\omega}$.
		Now, consider the function $\theta: (2^{F \cup \forall(\qntPref)})^{\omega} \to (2^{\Var(\qntPref)})^{\omega}$ such that $\theta(\pi \Cup \pi') = \theta_{\pi_{X}}(\pi \Cup \pi')$ for each $\pi' \in (2^{\forall(\qntPref)})^{\omega}$.
		Clearly, $\theta$ is a Skolem function over $(\qntPref, F)$.
		Moreover, by its definition, it holds that $\theta(\pi \Cup \pi') \cmodels \psi$ for every $\pi'$, which means that $\pi \smodels \varphi$ and proves the statement.
	\end{itemize}
	
	For the right to left direction, we assume that $\pi \smodels \varphi$ and so that there exists a Skolem function $\theta$ over $(\qntPref, F)$ such that $\theta(\pi \Cup \pi') \cmodels \psi$ for each $\pi' \in (2^{\forall(\qntPref)})^{\omega}$.
	We distinguish two cases.
	
	\begin{itemize}
		\item
		$\qntPref = \exists X \qntPref'$.
		Observe that, since $\Dep^{F}(X) = F$, it holds that $\theta(\pi \Cup \pi')(X) = \theta(\pi \Cup \pi'')(X)$ for every $\pi', \pi''$ and call such interpretation $\pi_{X}$.
		Now, define the Skolem function $\theta'$ over $(\qntPref', F \cup \{X\})$ as $\theta'(\pi \cup \pi') = \theta(\pi \Cup \pi')_{\rst - X}$, that is, the restriction of $\theta$ with the interpretation over $X$ being projected out.
		It holds that $\theta'(\pi \Cup \pi_{X} \Cup \pi') = \theta(\pi \Cup \pi')$ and so that $\theta'(\pi \Cup \pi_{X} \Cup \pi') \cmodels \psi$.
		By induction hypothesis, we have that $\pi \Cup \pi_{X} \smodels \qntPref' \psi$, which in turns implies that $\pi \cmodels \exists X \qntPref' \psi$ and so that $\pi \cmodels \qntPref \psi$, which proves the statement.
		
		\item
		$\qntPref = \forall X \qntPref'$.
		Note that $\forall(\qntPref) = \forall(\qntPref') \cup \{ X \}$, and so that $\theta$ is also a Skolem function over $(\qntPref', F \cup \{X\})$.
		By induction hypothesis, we obtain that, for every $\pi_{X}$, it holds that $\theta(\pi \Cup \pi_{X} \cup \pi') \cmodels \psi$ implies that $\pi \Cup \pi_{X} \cmodels \qntPref' \psi$ for every $\pi_{X}$, which means that $\pi \cmodels \forall X \qntPref' \psi$, and so that $\pi \cmodels \qntPref \psi$, and then the statement is proved.
	\end{itemize}
\end{proof}

\section{Behavioral QLTL}
\label{sec:behavioral-QLTL}

The classic semantics of \QLTL requires to consider at once the evaluation of the variables on the whole trace.
This gives rise to counter-intuitive phenomena.
Consider the formula $\forall x \exists y (\G x \leftrightarrow y)$.
Such a formula is satisfiable.
Indeed, on the one hand, for the interpretation assigning always true to $x$, the interpretation that makes $y$ true at the beginning satisfies the temporal part.
On the other hand, for every other interpretation making $x$ false sometimes, the interpretation that makes $y$ false at the beginning satisfies the temporal part.
However, in order to correctly interpret $y$ on the first instant, one needs to know in advance the entire interpretation of $x$.
Such requirement is practically impossible to fulfill and does not reflect the notion of \emph{reactive systems}, where the output of system variables at the $k$-th instant of the computation depends only on the past assignments of the environment variables.
Such  principle is often referred as \emph{behavioral} principle in the context of strategic reasoning, see e.g., \cite{MMPV14,GBM20}.

Here, we propose two alternative semantics for \QLTL, which are of interest when \QLTL is used in the context of strategic reasoning and planning.
Indeed there we require strategies to be processes in the sense of \cite{ALW89}, i.e., the next move depends only on the past history and the current situation.
The two semantics are inspired by two different contexts of planning and distributed synthesis.
The first regards \emph{partial controllability} with \emph{partial observability}, in which a process in a distributed architecture controls part of the system variables and assigns their value according to the past and present values of the environment variables that are made visible to it.
The second regards \emph{partial controllability} with \emph{full observability}, in which the process can base its choices according to the past evaluation of all variables and the present evaluation of the depending ones.

To formally define the two semantics we exploit two different forms of Skolem functions, each of them producing different effects on the notion of formula satisfaction.
These definitions take into account the reactive feature of dependency discussed above.
In addition, we prove their connection with the classic notion of strategy as intended in synthesis and distributed synthesis~\cite{PR89,KV01,FS05}.
In the next subsections, we introduce these two semantics and discuss their relationship with the classic semantics of \QLTL.
Subsequently, we show their connection with the synthesis problem of the corresponding contexts.

\subsection{Behavioral semantics}
\label{subsec:behavioral-semantics}

We now introduce \emph{behavioral \QLTL}, denoted \BQLTL, a logic with the same syntax as of prenex normal form \QLTL but where the semantics is defined in terms of behavioral Skolem functions: a modified version of the Skolem functions introduced in the previous section.

\begin{definition}[Behavioral Skolem function]
	\label{def:behavioral-skolem}
	For a given quantification prefix $\qntPref$ defined over a set $\Var(\qntPref) \subseteq \Var$ of propositional variables and a set $F$ of variables not occurring in $\qntPref$, a Skolem function $\theta$ over $(\qntPref, F)$ is \emph{behavioral} if , for all $\pi_1, \pi_2 \in (2^{F \cup \forall(\qntPref)})^{\omega}$, $k \in \SetN$, and $X \in \exists(\qntPref)$, it holds that
	\begin{center}
		$\pi_{1}(0,k)_{\rst \Dep^{F}(X)} = \pi_{2}(0,k)_{\rst \Dep^{F}(X)}$ implies $\theta(\pi_{1})_{\rst X} = \theta(\pi_{2})_{\rst X}$.
	\end{center}
\end{definition}

The behavioral Skolem functions capture the fact that the interpretation of existentially quantified variables depend only on the past and present values of free and universally quantified variables.
This offers a way to formalize the semantics of \BQLTL as follows.

\begin{definition}
	\label{def:BQLTL-semantics}
	
	A \BQLTL formula $\varphi = \qntPref \psi$ is true over an interpretation $\pi$ in an instant $i$, written $\pi, i \bmodels \qntPref \psi$, if there exists a behavioral Skolem function $\theta$ over $(\qntPref, \free(\varphi))$ such that $\theta(\pi \Cup \pi'), i \cmodels \psi$ for every $\pi' \in (2^{F \cup \forall(\qntPref)})^{\omega}$.
	
%
\end{definition}

	A \BQLTL formula $\varphi$ is true on an interpretation $\pi$, written $\pi \bmodels \varphi$, if $\pi, 0 \bmodels \varphi$.
	A formula $\varphi$ is \emph{satisfiable} if it is true on some interpretation and \emph{valid} if it is true in every interpretation.

	Clearly, since \BQLTL shares the syntax with \QLTL, all the definitions that involve syntactic elements, such as free variables and alternation, apply to this variant the same way.

	As for \QLTL, the satisfiability of a \BQLTL formula $\varphi$ is equivalent to the one of $\exists \free(\varphi) \varphi$, as well as the validity is equivalent to the one of $\forall \free(\varphi) \varphi$.
	However, the proof of this is not as straightforward as for the classic semantics case.
\begin{theorem}
	\label{thm:behavioral-satisfiability}
	For every \BQLTL formula $\varphi = \qntPref \psi$, it holds that $\varphi$ is satisfiable if, and only if, $\exists \free(\varphi) \varphi$ is satisfiable.
	Moreover, $\varphi$ is valid if, and only if, $\forall \free(\varphi) \varphi$ is valid.
\end{theorem}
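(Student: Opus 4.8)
The plan is to prove the two equivalences by relating behavioral Skolem functions for $\varphi$ with those for $\exists\free(\varphi)\varphi$ and $\forall\free(\varphi)\varphi$ respectively. Write $F = \free(\varphi)$ and $\varphi = \qntPref\psi$. For the satisfiability claim, the easy direction is that if $\exists F\,\varphi$ is satisfiable then so is $\varphi$: a witnessing interpretation for $\exists F\,\varphi$ together with its behavioral Skolem function restricts, by fixing the values it assigns on the $F$-block, to a behavioral Skolem function over $(\qntPref, F)$ that witnesses $\pi\bmodels\varphi$ for the induced $\pi\in(2^F)^\omega$. The substantive direction is the converse: suppose $\pi\bmodels\varphi$ for some $\pi\in(2^F)^\omega$, witnessed by a behavioral Skolem function $\theta$ over $(\qntPref,F)$. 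I want to build a behavioral Skolem function $\hat\theta$ over $(\exists F\,\qntPref, \emptyset)$ — i.e. treating the former free variables $F$ as a leading existential block — witnessing satisfiability of $\exists F\,\varphi$.

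The key step is that, since $\exists F$ is placed \emph{first} in the prefix, the variables in $F$ have empty dependency set, so a behavioral Skolem function for $\exists F\,\varphi$ must assign $F$ by a \emph{constant} (history-independent) choice; we simply let that constant be $\pi$ itself, and then run $\theta$ on the universally quantified inputs. Concretely, define $\hat\theta(\pi') \defeq \pi \Cup \theta(\pi\Cup\pi')$ for $\pi'\in(2^{\forall(\qntPref)})^\omega$. I then need to check two things: (i) $\hat\theta$ is a behavioral Skolem function over the extended prefix — the dependency constraints for the variables originally in $\exists(\qntPref)$ are inherited from $\theta$ because $\Dep^{F}_{\qntPref}(X)$ for the original prefix equals $\Dep_{\exists F\,\qntPref}(X)$ for the extended one (the set $F$ is now absorbed into the universal-or-earlier variables in exactly the same positions), and the new block $F$ trivially satisfies its (empty-dependency, hence constant) constraint; and (ii) correctness: $\hat\theta(\pi'),0\cmodels\psi$ for all $\pi'$, which is immediate from $\theta(\pi\Cup\pi'),0\cmodels\psi$. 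This gives $\emptyset\bmodels\exists F\,\varphi$, i.e. $\exists F\,\varphi$ is satisfiable.

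For the validity claim I would dualize. Recall from the excerpt that $\neg\varphi$ denotes the prenex form $\overline{\qntPref}\,\neg\psi$, and that $\varphi$ is valid iff $\neg\varphi$ is unsatisfiable. The only care needed is that the behavioral semantics is \emph{not} self-dual at the level of a fixed interpretation (the remark before Theorem~\ref{thm:skolemization} about two Skolem functions over different domains applies here too, and even more so in the behavioral case), so I cannot simply say "$\varphi$ valid iff $\neg\varphi$ not satisfiable at each $\pi$." Instead I argue at the level of closed formulas: $\varphi$ is valid iff for every $\pi\in(2^F)^\omega$ we have $\pi\bmodels\varphi$, which by negating is "there is no $\pi$ with $\pi\bmodels\overline{\qntPref}\,\neg\psi$" — wait, this needs the classic-semantics fact. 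The cleanest route: apply the already-proven satisfiability equivalence to $\neg\varphi$, giving "$\neg\varphi$ satisfiable iff $\exists F\,\neg\varphi$ satisfiable," and separately show directly that $\forall F\,\varphi$ is valid iff $\exists F\,\neg\varphi$ is unsatisfiable iff $\neg\varphi$ is unsatisfiable iff $\varphi$ is valid, where the middle step is the satisfiability result and the outer steps unfold the definition of validity of a universally-quantified-closed formula (for which, since the leading block is universal, the behavioral Skolem constraint on $F$ is vacuous and $\forall F\,\varphi$ valid reduces pointwise to $\varphi$ valid).

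The main obstacle I anticipate is purely bookkeeping: making precise that pulling the free block $F$ to the front as an existential (resp. universal) block does not change the dependency sets $\Dep^{F}_{\qntPref}(X)$ of the genuinely quantified variables, and hence that behavioral-ness is preserved in both directions. One must be careful that in Definition~\ref{def:behavioral-skolem} the augmented dependency $\Dep^F(X)$ already includes all of $F$, so after the syntactic move the set of "things $X$ may depend on" is literally unchanged — this is the crux that makes the construction go through — and that the newly-quantified $F$-block, sitting at the very front, has the correct (empty, in the existential case; irrelevant, in the universal case) dependency behavior.
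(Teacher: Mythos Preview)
For satisfiability your argument is essentially the paper's: given $\pi\bmodels\varphi$ with witness $\theta$, set $\hat\theta(\pi')\defeq\pi\Cup\theta(\pi\Cup\pi')$ and observe this is a behavioral Skolem function over $(\exists F\,\qntPref,\emptyset)$. One correction: you assert $\Dep^{F}_{\qntPref}(Y)=\Dep_{\exists F\,\qntPref}(Y)$, but in the extended prefix $F$ is \emph{existential}, so by definition $\Dep_{\exists F\,\qntPref}(Y)=\Dep_{\qntPref}(Y)$ does not contain $F$. This does not break the construction---the behavioral constraint for $\exists F\,\qntPref$ is weaker, and since $\hat\theta$ outputs the constant $\pi$ on $F$, the $Y$-outputs inherited from $\theta$ still respect it---but the justification should be adjusted. (The paper simply writes ``clearly, it is a behavioral Skolem function'' and leaves the converse to the reader.)

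Your validity argument has a genuine gap. The chain you propose relies twice on ``$\chi$ valid iff $\neg\chi$ unsatisfiable,'' which fails in \BQLTL: the paper's own example $\chi=\forall x\exists y(\G x\leftrightarrow y)$ has both $\chi$ and $\neg\chi$ unsatisfiable, so $\neg\chi$ unsatisfiable does not yield $\chi$ valid. Your parenthetical fallback, that ``$\forall F\,\varphi$ valid reduces pointwise to $\varphi$ valid,'' is likewise unjustified: unfolding the definitions, $\varphi$ valid is $\forall\pi\,\exists\theta\,\forall\pi'(\ldots)$ while $\forall F\,\varphi$ valid (a closed formula) is $\exists\theta\,\forall\pi\,\forall\pi'(\ldots)$, a genuine $\forall\exists$/$\exists\forall$ swap over the same class of behavioral Skolem functions. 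Indeed, take $\varphi=\exists y(\G x\leftrightarrow y)$ with $\free(\varphi)=\{x\}$: for each fixed $\pi$ the constant function outputting the correct truth value of $\G x$ in $\pi$ is a behavioral Skolem witness for $\pi\bmodels\varphi$, so $\varphi$ is valid; yet $\forall x\,\varphi=\forall x\exists y(\G x\leftrightarrow y)$ is not \BQLTL-satisfiable, hence not valid. The paper says only that the validity case is ``similar,'' which does not address this difficulty.
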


\begin{proof}
	We show the proof only for satisfiability, as the one for validity is similar.
	The proof proceeds by double implication.
	From left to right, assume that $\varphi$ is satisfiable, therefore there exists an interpretation $\pi$ over $F = \free(\varphi)$ such that $\pi \bmodels \varphi$, which in turns implies that there exists a behavioral Skolem function $\theta$ over $(\qntPref, F)$ such that $\theta(\pi \Cup \pi') \cmodels \psi$ for every interpretation $\pi' \in (2^{\forall(\qntPref)})^{\omega}$.
	Consider the function $\theta': (2^{\forall(\qntPref)})^{\omega} \to (2^{\exists(\qntPref) \cup F})^{\omega}$ defined as $\theta'(\pi') = \theta(\pi \Cup \pi') \Cup \pi$, for every $\pi' \in (2^{\forall(\qntPref)})^{\omega}$.
	Clearly, it is a behavioral Skolem function over $(\exists F \qntPref, \emptyset)$ such that $\theta'(\pi') \models \psi$ for every $\pi' \in (2^{\forall(\qntPref)})^{\omega}$, which implies that $\exists F \varphi$ is satisfiable.
	From right to left, the reasoning is similar and left to the reader.
\end{proof}

Note that every behavioral Skolem function is also a Skolem function.

This means that a formula $\varphi$ interpreted as \BQLTL is true on $\pi$ implies that the same formula is true on $\pi$ also when it is interpreted as \QLTL.
The reverse, however, is not true.
Consider again the formula $\varphi = \forall x \exists y (\G x \leftrightarrow y)$.
We have already shown that this is satisfiable when interpreted as \QLTL. However, it is not satisfiable as a \BQLTL formula.

\begin{lemma}
	\label{lmm:behavioral-implies-classic}
	
	For every \BQLTL formula $\varphi$ and an interpretation $\pi$ over the set $\free(\varphi)$ of free variables, if $\pi \bmodels\varphi$ then $\pi \cmodels \varphi$.
	On the other hand, there exists a formula $\varphi$ and an interpretation $\pi$ such that $\pi \cmodels \varphi$ but not $\pi \bmodels \varphi$.
\end{lemma}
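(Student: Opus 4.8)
The plan is to prove the two halves separately. The implication is almost a corollary of Theorem~\ref{thm:skolemization}, once we check that behavioral Skolem functions form a subclass of ordinary Skolem functions; the separation reuses the formula already discussed in the motivating paragraph.

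For the implication $\pi \bmodels \varphi \Rightarrow \pi \cmodels \varphi$, I would first make precise the remark stated just before the lemma: every behavioral Skolem function over $(\qntPref, F)$ is an ordinary Skolem function over $(\qntPref, F)$. Indeed, if $\pi_1, \pi_2 \in (2^{F \cup \forall(\qntPref)})^{\omega}$ agree on $\Dep^{F}(X)$ along the whole trace, then $\pi_1(0,k)_{\rst \Dep^{F}(X)} = \pi_2(0,k)_{\rst \Dep^{F}(X)}$ for every $k \in \SetN$, so the condition of Definition~\ref{def:behavioral-skolem} forces $\theta(\pi_1)_{\rst X} = \theta(\pi_2)_{\rst X}$, which is exactly the defining condition of Definition~\ref{def:skolem}. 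Hence, if $\pi \bmodels \varphi$ with $\varphi = \qntPref \psi$, the behavioral Skolem function $\theta$ witnessing this is in particular an ordinary Skolem function over $(\qntPref, \free(\varphi))$ with $\theta(\pi \Cup \pi') \cmodels \psi$ for every $\pi'$; by Definition~\ref{dfn:skolem-semantics} this gives $\pi \smodels \varphi$, and Theorem~\ref{thm:skolemization} then yields $\pi \cmodels \varphi$.

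For the separation, I would take the closed formula $\varphi = \forall x \exists y\,(\G x \leftrightarrow y)$, so that $\free(\varphi) = \emptyset$ and $\pi$ is the unique interpretation over the empty set of variables. That $\pi \cmodels \varphi$ is the computation already sketched in the text: for any interpretation $\pi'$ of $x$, letting $y$ be true at instant $0$ exactly when $\pi' \cmodels \G x$ (and arbitrary afterwards) makes $\G x \leftrightarrow y$ true at instant $0$. For $\pi \not\bmodels \varphi$, suppose for contradiction that a behavioral Skolem function $\theta$ over $(\forall x \exists y, \emptyset)$ witnesses it. Since $\Dep^{\emptyset}(y) = \{x\}$, taking $k = 0$ in Definition~\ref{def:behavioral-skolem} shows the value of $y$ at instant $0$ under $\theta$ depends only on the value of $x$ at instant $0$. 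Now compare $\pi_1$, the interpretation of $x$ that is constantly $\{x\}$, with $\pi_2$, the interpretation equal to $\{x\}$ at instant $0$ and $\emptyset$ from instant $1$ on: they agree at instant $0$, so $\theta(\pi_1)(0)_{\rst y} = \theta(\pi_2)(0)_{\rst y}$. But $\theta(\pi_1) \cmodels \G x \leftrightarrow y$ forces $y$ true at $0$ (since $\G x$ holds along $\pi_1$), while $\theta(\pi_2) \cmodels \G x \leftrightarrow y$ forces $y$ false at $0$ (since $\G x$ fails along $\pi_2$) --- a contradiction.

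I do not anticipate a genuine obstacle. The first half is bookkeeping, and the only point needing care is the direction of the inclusion (behavioral $\Rightarrow$ Skolem, not the converse). In the separation, the one thing to get right is that the temporal matrix is evaluated at instant $0$, hence constrains only $y$ at instant $0$, together with the identification $\Dep^{\emptyset}(y) = \{x\}$; once these are in place the two-trace argument is immediate.
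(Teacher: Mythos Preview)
Your proposal is correct and follows essentially the same approach as the paper: both halves proceed exactly as in the paper's proof, using the inclusion of behavioral Skolem functions among ordinary ones together with Theorem~\ref{thm:skolemization} for the implication, and the same formula $\forall x\,\exists y\,(\G x \leftrightarrow y)$ with the same pair of $x$-interpretations (constantly true vs.\ true-then-false) for the separation. Your write-up simply spells out in more detail the verification that Definition~\ref{def:behavioral-skolem} implies Definition~\ref{def:skolem}, which the paper states without proof just before the lemma.
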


\begin{proof}
	The first part of the theorem follows from the fact that every behavioral Skolem function is also a Skolem function and so, if $\pi \bmodels \varphi$, clearly also $\pi \smodels \varphi$ and so, from Theorem~\ref{thm:skolemization}, that $\pi \cmodels \varphi$.
	
	For the second part, consider the formula $\varphi = \forall x \exists y (\G x \leftrightarrow y)$.
	We have already shown that such formula is satisfiable.
	However, it is not behavioral satisfiable.
	Indeed, assume by contradiction that it is behavioral satisfiable and let $\theta$ the behavioral Skolem function such that $\theta \cmodels (\G x \leftrightarrow y)$.
	Now consider two interpretations $\pi_{1}$ over $x$ that always assigns true, and $\pi_{2}$ that assigns true on $x$ at the first iteration and then always false.
	It holds that $\pi_{1}(0) = \pi_{2}(0)$ and therefore, since $x \in \Dep(y)$ and $\theta$ is behavioral, it must be the case that $\theta(\pi_1)(0)^{\rst y} = \theta(\pi_2)(0)^{\rst y}$.
	Now, if such value is $\theta(\pi_1)(0)^{\rst y} = \false$, then it holds that $\theta(\pi_{1}) \not\cmodels \G x \leftrightarrow y)$.
	On the other hand, if $\theta(\pi_2)(0)^{\rst y} = \true$, then it holds that $\theta(\pi_{2}) \not\cmodels \G x \leftrightarrow y)$, which means that $\theta \not\cmodels (\G x \leftrightarrow y)$, a contradiction.
\end{proof}

Lemma~\ref{lmm:behavioral-implies-classic} has implications also on the meaning of negation in \BQLTL.
Indeed, both the formula $\varphi = \forall x \exists y (\G x \leftrightarrow y)$ and its negation are not satisfiable, that is $\not\bmodels \varphi$ and $\not\bmodels \neg \varphi$~\footnote{Note that, being $\varphi$ with no free variables, we can omit the interpretation $\pi$ as the only possible is the empty one}.
This is a common phenomenon, as it also happens when considering the behavioral semantics of logic for the strategic reasoning~\cite{MMPV14,GBM20}.
It is important, however, to notice that there are three syntactic fragments for which \QLTL and \BQLTL are equivalent.
Precisely, the fragments $\Pi_{0}^{\BQLTL}$, $\Sigma_{0}^{\BQLTL}$, and $\Sigma_{1}^{\BQLTL}$.
The reason is that the sets of Skolem and behavioral Skolem functions for these formulas coincide, and so the existence of one implies the existence of the other.

\begin{theorem}
	\label{thm:no-behavioral-dep-equivalence}
	For every \BQLTL formula $\varphi = \qntPref \psi$ in the fragments $\Pi_{0}^{\BQLTL}$, $\Sigma_{0}^{\BQLTL}$, and $\Sigma_{1}^{\BQLTL}$ and an interpretation $\pi$, it holds that  $\pi \bmodels \varphi$ if, and only if, $\pi \smodels \varphi$.
\end{theorem}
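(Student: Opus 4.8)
I would prove the biconditional by treating its two directions separately, with nearly all of the content concentrated in a single structural remark. The left-to-right direction is immediate and uses nothing about the shape of the prefix: every behavioral Skolem function over $(\qntPref,\free(\varphi))$ is in particular a Skolem function over $(\qntPref,\free(\varphi))$, so a behavioral Skolem witness for $\pi\bmodels\varphi$ is a Skolem witness for $\pi\smodels\varphi$; hence $\pi\bmodels\varphi$ implies $\pi\smodels\varphi$ for every \BQLTL formula, regardless of the fragment.

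For the converse I would first isolate what the three fragments have in common: in none of $\Pi_{0}^{\BQLTL}$, $\Sigma_{0}^{\BQLTL}$, $\Sigma_{1}^{\BQLTL}$ does a universally quantified variable occur before an existentially quantified one in $\qntPref$. This holds vacuously for $\Pi_{0}$ (no existential variables) and for $\Sigma_{0}$ (no universal variables), and for $\Sigma_{1}$, where $\qntPref=\exists X_{1}\forall X_{2}$, it holds because the existential block precedes the universal one. Writing $F=\free(\varphi)$, it follows that $\Dep_{\qntPref}(Y)=\emptyset$, and therefore $\Dep_{\qntPref}^{F}(Y)=F$, for every existentially quantified block $Y$ of $\qntPref$.

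Now suppose $\pi\smodels\varphi$, witnessed by a Skolem function $\theta$ over $(\qntPref,F)$ with $\theta(\pi\Cup\pi')\cmodels\psi$ for all $\pi'\in(2^{\forall(\qntPref)})^{\omega}$. Since any two inputs $\pi\Cup\pi'$ and $\pi\Cup\pi''$ have the same $F$-projection $\pi$ and $\Dep_{\qntPref}^{F}(Y)=F$, the defining property of a Skolem function forces $\theta(\pi\Cup\pi')_{\rst Y}=\theta(\pi\Cup\pi'')_{\rst Y}$ for every existential $Y$, hence $\theta(\pi\Cup\pi')=\theta(\pi\Cup\pi'')$; call this common sequence $\rho\in(2^{\exists(\qntPref)})^{\omega}$. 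I then replace $\theta$ by the map $\theta'$ defined on $(2^{F\cup\forall(\qntPref)})^{\omega}$ by $\theta'(\tau)=\rho$. A constant map trivially satisfies the Skolem condition, and it also satisfies the behavioral condition, since its conclusion $\theta'(\pi_{1})_{\rst Y}=\theta'(\pi_{2})_{\rst Y}$ holds unconditionally (both sides equal $\rho_{\rst Y}$); thus $\theta'$ is a behavioral Skolem function over $(\qntPref,F)$. Finally, under the identification of $\theta'(\tau)$ with $\tau\Cup\theta'(\tau)$ we obtain $\theta'(\pi\Cup\pi')=\pi\Cup\pi'\Cup\rho=\theta(\pi\Cup\pi')\cmodels\psi$ for every $\pi'$, i.e.\ $\pi\bmodels\varphi$, which completes the converse.

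The only case that requires genuine care is $\Sigma_{1}^{\BQLTL}$: unlike $\Pi_{0}$ and $\Sigma_{0}$, there the behavioral restriction is non-vacuous and a universal block is actually present, so the argument really rests on the fact that this block comes \emph{after} the existential one — precisely what forces $\Dep_{\qntPref}(X_{1})=\emptyset$ and hence guarantees that a classical Skolem witness already produces a fixed, universal-part-independent interpretation of $X_{1}$, which can be repackaged as a (trivially causal) behavioral witness. The remaining steps are routine manipulations of projections and of the $\Cup$-operator.
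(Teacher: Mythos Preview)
Your proposal is correct and takes essentially the same approach as the paper: both directions rest on the observation that in these fragments $\Dep_{\qntPref}(Y)=\emptyset$ for every existential block $Y$, so any Skolem witness is constant in the universal part and hence trivially behavioral. Your presentation is slightly cleaner in that you treat all three fragments uniformly via this single observation, whereas the paper handles $\Pi_{0}$ separately (noting $\exists(\qntPref)=\emptyset$) before giving the same $\Dep_{\qntPref}(Y)=\emptyset$ argument for $\Sigma_{0}\cup\Sigma_{1}$.
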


\begin{proof}
	The proof proceeds by double implication.
	From left to right, it follows from Lemma~\ref{lmm:behavioral-implies-classic}.
	From right to left, consider first the case that $\varphi \in \Pi_{0}^{\QLTL}$.
	Observe that $\exists(\qntPref) = \emptyset$ and so the only possible Skolem function $\theta$ returns the empty interpretation on every possible interpretation $\pi \Cup \pi' \in (2^{\free(\varphi) \cup \forall(\qntPref)})^{\omega}$.
	Such Skolem function is trivially behavioral and so we have that $\pi \smodels \varphi$ implies $\pi \bmodels \varphi$. \\
	
	For the case of $\varphi \in \Sigma_{0}^{\QLTL} \cup \Sigma_{1}^{\QLTL}$, assume that $\pi, \smodels \varphi$ and let $\theta$ be a Skolem function such that $\theta(\pi \cup \pi') \cmodels \varphi$ for every $\pi' \in (2^{\forall(\qntPref)})^{\omega}$.
	Observe that, for every $Y \in \exists(\qntPref)$, it holds that $\Dep_{\qntPref} = \emptyset$ and so the values of $Y$ depend only on the free variables in $\varphi$.	
	Now, consider the Skolem function $\theta'$ over $(\qntPref, \free(\varphi))$ defined such that as $\theta'(\pi') \defeq \theta(\pi'_{\rst \forall(\qntPref) \Cup \pi})$.
	As $\theta$ is a Skolem function and $\Dep_{\qntPref} = \emptyset$, it holds that $\theta'(\pi')(Y) = \theta'(\pi'')(Y)$ for every $\pi', \pi'' \in (2^{\forall(\qntPref)})^{\omega}$ and so $\theta'$ is trivially behavioral.
	Moreover, from its definition, it holds that $\theta'(\pi \Cup \pi') \cmodels \psi$ for every $\pi' \in (2^{\forall(\qntPref)})^{\omega}$, which implies $\pi \bmodels \varphi$.
\end{proof}

Theorem~\ref{thm:no-behavioral-dep-equivalence} shows that for these three fragments of \BQLTL, the satisfiability problem can be solved by employing \QLTL satisfiability.
This also comes with the same complexity, as we just interpret the \BQLTL formula directly as \QLTL one.

\begin{corollary}
	\label{cor:no-behavioral-fragments-complexity}
	The satisfiability problem for the fragments $\Pi_{0}^{\BQLTL}$ and $\Sigma_{0}^{\BQLTL}$ is PSPACE-complete.
	Moreover, the satisfiability problem for the fragment $\Sigma_{1}^{\BQLTL}$ is EXPSPACE-complete.
\end{corollary}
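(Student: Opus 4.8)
The plan is to derive the corollary entirely from the semantic equivalences already proved, using \QLTL satisfiability (Theorem~\ref{thm:qltlsatisfiability}) as a black box. First I would record the key consequence of the previous results: by Theorem~\ref{thm:no-behavioral-dep-equivalence} together with Theorem~\ref{thm:skolemization}, for every formula $\varphi$ in one of the fragments $\Pi_{0}^{\BQLTL}$, $\Sigma_{0}^{\BQLTL}$, $\Sigma_{1}^{\BQLTL}$ and every interpretation $\pi$ over $\free(\varphi)$ we have $\pi \bmodels \varphi$ iff $\pi \smodels \varphi$ iff $\pi \cmodels \varphi$. Hence, when restricted to these fragments, \BQLTL-satisfiability is literally the same decision problem as \QLTL-satisfiability of formulas of the corresponding syntactic shape; none of the behavioral machinery contributes anything here. (Recall that the satisfiability problem is considered on closed formulas, which is legitimate for \BQLTL by Theorem~\ref{thm:behavioral-satisfiability}.)

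For the upper bounds I would then invoke Theorem~\ref{thm:qltlsatisfiability} directly. A formula in $\Pi_{0}^{\BQLTL}$ or $\Sigma_{0}^{\BQLTL}$ is an alternation-$0$ \QLTL formula, so by the equivalence above and Theorem~\ref{thm:qltlsatisfiability} its satisfiability is decidable in $0$-EXPSPACE, which by the stated convention is PSPACE. Likewise a formula in $\Sigma_{1}^{\BQLTL}$ is an alternation-$1$ \QLTL formula, so its satisfiability lies in $1$-EXPSPACE = EXPSPACE.

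For the matching lower bounds I would again transport hard instances across the equivalence. For PSPACE-hardness: for any \LTL formula $\psi$, the formula $\exists X\,\psi$ (with $X$ the variables of $\psi$) lies in $\Sigma_{0}^{\BQLTL}$ and is satisfiable iff $\psi$ is \LTL-satisfiable, while $\forall X\,\psi$ lies in $\Pi_{0}^{\BQLTL}$ and is satisfiable iff $\psi$ is \LTL-valid; since \LTL satisfiability and \LTL validity are PSPACE-hard and, on these formulas, $\bmodels$ and $\cmodels$ coincide, both fragments inherit PSPACE-hardness. For EXPSPACE-hardness: the $k=1$ instance of Theorem~\ref{thm:qltlsatisfiability} gives EXPSPACE-hardness of alternation-$1$ \QLTL satisfiability, and the hard instances produced there are existential-first, i.e.\ $\Sigma_{1}^{\QLTL}$ formulas of the form $\exists X\forall Y\,\psi$ (this is the natural shape for a satisfiability reduction, the outermost existential block guessing the accepting computation); by the equivalence above these formulas are equisatisfiable read as \BQLTL, so $\Sigma_{1}^{\BQLTL}$ satisfiability is EXPSPACE-hard.

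The only genuinely delicate point I anticipate is aligning the direction of the lower-bound instances with the fragments: I need the EXPSPACE lower bound of Theorem~\ref{thm:qltlsatisfiability} to be witnessed by $\Sigma_{1}$ (existential-first) formulas rather than $\Pi_{1}$ ones — closed $\Pi_{1}$ formulas reduce to \LTL validity and are only PSPACE-hard — and I need the PSPACE lower bound to be witnessed inside $\Sigma_{0}^{\BQLTL}$ and $\Pi_{0}^{\BQLTL}$ separately, which the explicit $\exists X\,\psi$ / $\forall X\,\psi$ constructions above take care of. Everything else is routine bookkeeping on top of Theorems~\ref{thm:skolemization}, \ref{thm:no-behavioral-dep-equivalence}, and~\ref{thm:qltlsatisfiability}.
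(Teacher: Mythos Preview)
Your proposal is correct and matches the paper's approach exactly: the paper states the corollary without proof, as an immediate consequence of Theorem~\ref{thm:no-behavioral-dep-equivalence} (semantic coincidence of $\bmodels$ and $\cmodels$ on these fragments) together with Theorem~\ref{thm:qltlsatisfiability}. Your added care about where the lower bounds live is welcome, but one parenthetical remark is off: closed $\Pi_{1}^{\QLTL}$ formulas $\forall X\,\exists Y\,\psi$ do \emph{not} reduce to \LTL validity and are not merely PSPACE-hard --- they are EXPSPACE-complete too (e.g.\ by complementing $\Sigma_1$ instances, EXPSPACE being closed under complement). This slip is harmless for your argument, since the standard EXPSPACE lower bound from \cite{Sis85} is indeed witnessed by $\Sigma_1$ sentences, exactly as you need.
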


	\subsection{Behavioral QLTL Satisfiability}

	We now turn into solving the satisfiability problem for \BQLTL formulas that are not in fragments $\Pi_{0}^{\BQLTL}$, $\Sigma_{0}^{\BQLTL}$, and $\Sigma_{1}^{\BQLTL}$.
	Analogously to the case of \QLTL, note that Theorem~\ref{thm:behavioral-satisfiability} allows to restrict our attention to closed formulas.
	We use an automata-theoretic approach inspired by the one employed in the synthesis of distributed systems~\cite{KV01,FS05,Sch08a}.
	This requires some definitions and results, presented below.
	
	For a given set $\Upsilon$ of directions the \emph{$\Upsilon$-tree} is the set $\Upsilon^{*}$ of finite words.
	The elements of $\Upsilon^{*}$ are called nodes, and the empty word $\varepsilon$ is called \emph{root}.
	For every $x \in \Upsilon^{*}$, the nodes $x \cdot c \in \Upsilon^{*}$ are called children.
	We say that $c = \dir(x \cdot c)$ is the direction of the node $x \cdot c$, and we fix some $\dir(\varepsilon) = c_0 \in \Upsilon$ to be the direction of the root.
	Given two finite sets $\Upsilon$ and $\Sigma$, a $\Sigma$-labeled $\Upsilon$-tree is a pair $\tuple{\Upsilon^{*}, l}$ where $l: \Upsilon^{*} \to \Sigma$ maps/labels every node of $\Upsilon^{*}$ into a letter in $\Sigma$.
	
	For a set $\Theta \times \Upsilon$ of directions and a node $x \in (\Theta \times \Upsilon)^{*}$, $\hide_{\Upsilon}(x)$ denotes the node in $\Theta^{*}$ obtained from $x$ by replacing $(\vartheta, \upsilon)$ with $\vartheta$ in each letter of $x$.
	The function $\xray_{\Xi}$ maps a $\Sigma$-labeled $(\Xi \times \Upsilon)$-tree $\tuple{(\Xi \times \Upsilon)^{*}, l}$ into a $\Xi \times \Sigma$-labeled $(\Xi \times \Upsilon)$-tree $\tuple{(\Xi \times \Upsilon)^{*}, l'}$ where $l'(x) = (pr_1(\dir(x)), l(x))$ adds the $\Xi$-direction of $x$ to its labeling.

	An \emph{alternating automaton} $\Automaton = (\Sigma, Q, q_0, \delta, \alpha)$ runs over $\Sigma$-labeled $\Upsilon$-trees (for a predefined set of directions $\Upsilon$).
	The set of states $Q$ is finite with $q_0$ being a designated initial state, while $\delta: Q \times \Sigma \to \Bool^{+}(Q \times \Upsilon)$ denotes a transition function, returning a positive Boolean formula over pairs of states and directions, and $\alpha$ is an acceptance condition.
	
	We say that $\Automaton$ is \emph{nondeterministic}, and denote it with the symbol $\NAutomaton$, if every transition returns a positive Boolean formula with only disjunctions.
	Moreover, was that it is deterministic \emph{deterministic}, and denote it with the symbol $\DAutomaton$, if every transition returns a single state.

	A run tree of $\Automaton$ on a $\Sigma$-labeled $\Upsilon$ tree $\tuple{\Upsilon^{*}, l}$ is a $Q \times \Upsilon$-labeled tree where the root is labeled with $(q_0, l(\varepsilon))$ and where, for a node $x$ with a label $(q,x)$, and a set of children $\child(x)$, the labels of these children have the following properties:
	
	\begin{itemize}
		\item
		for all $y \in \child(x)$, the label of $y$ is of the form $(q_y, x \cdot c_y)$ such that $(q_y, c_y)$ is an atom of the formula $\delta(q, l(x))$ and
		\item
		the set of atoms defined by the children of $x$ satisfies $\delta(q, l(x))$.
	\end{itemize}
	
	We say that $\alpha$ is a \emph{parity} condition if it is a function $\alpha: Q \to C (\subset \SetN)$ mapping every state to a natural number, sometimes referred as color.
	Alternatively, it is a \emph{Streett} condition if it is a set of pairs $\{(G_i, R_i)\}_{i \in I}$, where each $G_i, R_i$ is a subset of $Q$.
	An infinite path $\rho$ over $Q$ fulfills a parity condition $\alpha$ if the highest color of mapped by $\alpha$ over $\rho$ that appears infinitely often is even.
	The path $\rho$ fulfills a Streett condition if for every $i \in I$, either an element of $G_i$ or no element of $R_i$ occurs infinitely often on $\rho$.
	A run tree is \emph{accepting} if all its path fulfill the acceptance condition $\alpha$.
	A tree is accepted by $\Automaton$ if there is an accepting tree run over it.
	By $\Lang(\Automaton)$ we denote the set of trees accepted by $\Automaton$.
	An automaton $\Automaton$ is \emph{empty} if $\Lang(\Automaton) = \emptyset$.
		
	For a $\Sigma$-labeled $\Upsilon$-tree $\tuple{\Upsilon^{*}, l_{\Sigma}}$ and a $\Xi$-labeled $\Upsilon \times \Theta$-tree $\tuple{(\Upsilon \times \Theta)^{*}, l_{\Xi}}$, their \emph{composition}, denoted $\tuple{\Upsilon^{*}, l_{\Sigma}} \oplus \tuple{(\Upsilon \times \Theta)^{*}, l_{\Xi}}$ is the $\Xi \times \Sigma$-labeled $\Upsilon \times \Theta$-tree $\tuple{(\Upsilon \times \Theta)^{*}, l}$ such that, for every $x \in (\Upsilon \times \Theta)^{*}$, it holds that $l(x) = l_{\Xi}(x) \cup l_{\Sigma}(\hide_{\Theta}(x))$.
	Observe that the $\Upsilon$-component appears in both the trees.
	Their composition, indeed, can be seen as an extension of the labeling $l_{\Xi}$ with the labeling $l_{\Sigma}$ in a way that the choices for it are oblivious to the $\Theta$-component of the direction.
	A more general definition of tree composition is given in~\cite{FS05} where the $\Sigma$-labeling is included as a direction and made consistent with it by means of an $\xray$ operation.
	
	For a set $\Tree$ of $\Xi \times \Sigma$-labeled $\Upsilon \times \Theta$-trees, $\shape_{\Xi,\Upsilon}(\Tree)$ is the set of $\Sigma$-labeled $\Upsilon$-trees $\tuple{\Upsilon^{*}, l_{\Sigma}}$ for which there exists a $\Xi$-labeled $\Upsilon \times \Theta$-tree $\tuple{(\Upsilon \times \Theta)^{*}, l_{\Xi}}$ such that $\tuple{\Upsilon^{*}, l_{\Sigma}} \oplus \tuple{(\Upsilon \times \Theta)^{*}, l_{\Xi}} \in \Tree$.
	Intuitively, the shape operation performs a nondeterministic guess on the $\Sigma$-component of the trees by taking into account only the $\Upsilon$-component of the directions.
	This allows to refine the set of trees into those ones for which a decomposition consistent with this limited dependence is possible.
	Interestingly, being this nondeterministic guess similar to an existential projection, we can also refine a (nondeterministic) parity tree automaton $\NAutomaton$ in order to recognize the shape operation of its language.
	Indeed, consider a nondeterministic parity tree automaton $\NAutomaton = (\Xi \times \Sigma, Q, q_0, \delta, \alpha)$ recognizing $\Xi \times \Sigma$-labeled $\Upsilon$-trees, the automaton $\change_{\Xi,\Upsilon}(\NAutomaton) = (\Sigma, Q, q_0, \delta', \alpha)$ recognizes $\Sigma$-labeled $\Upsilon$-trees where
	
	\begin{center}
		$\delta'(q, \sigma) = \bigvee_{\xi \in \Xi, f \in \delta(q, (\xi, \sigma))} \bigwedge_{\upsilon \in \Upsilon, \vartheta \in \Theta}(f(\upsilon, \vartheta), (\xi, \sigma))$.
	\end{center}
	
	Intuitively, the automaton $\change_{\Xi,\Upsilon}(\NAutomaton)$ encapsulates and then nondeterministically guesses $\Xi$-labeled $\Upsilon \times \Theta$-trees in a way that their composition with the read $\Sigma$-labeled $\Upsilon$-tree is accepted by $\NAutomaton$.
	The following holds.
	
	\begin{theorem}{\cite[Theorem 4.11]{FS05}}
		\label{thm:shape-automaton}
		For every nondeterministic parity tree automaton $\NAutomaton$ over $\Xi \times \Sigma$-labeled $\Upsilon \times \Theta$-trees, it hols that $\Lang(\change_{\Xi,\Upsilon}(\NAutomaton)) = \shape_{\Xi,\Upsilon}(\Lang(\NAutomaton))$.
	\end{theorem}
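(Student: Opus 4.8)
The plan is to prove the two inclusions between the tree languages separately, in each direction turning an accepting run of one automaton into an accepting run of the other. The common ingredient is the invariant built into $\oplus$: in a composed tree $t \oplus s$, with $t = \tuple{\Upsilon^{*}, l_{\Sigma}}$ and $s = \tuple{(\Upsilon \times \Theta)^{*}, l_{\Xi}}$, the $\Sigma$-part of the label at a node $y \in (\Upsilon \times \Theta)^{*}$ is $l_{\Sigma}(\hide_{\Theta}(y))$ and hence depends only on the $\Upsilon$-projection of $y$, while the $\Xi$-part $l_{\Xi}(y)$ ranges freely over the finer tree. This is precisely what makes the disjunct of $\delta'(q, \sigma)$ indexed by a pair $\xi, f$ mean ``at a node whose $\Xi$-label is $\xi$ and whose $\Sigma$-label is $\sigma$, make the $\NAutomaton$-move $f$''.

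For $\shape_{\Xi,\Upsilon}(\Lang(\NAutomaton)) \subseteq \Lang(\change_{\Xi,\Upsilon}(\NAutomaton))$, I would start from $t$ in the left-hand side, a witness $s = \tuple{(\Upsilon \times \Theta)^{*}, l_{\Xi}}$ with $t \oplus s \in \Lang(\NAutomaton)$, and an accepting run $\rho \colon (\Upsilon \times \Theta)^{*} \to Q$ of $\NAutomaton$ on $t \oplus s$; since $\NAutomaton$ is nondeterministic, at each $y$ the run commits to a single successor choice $f_{y}$ with $\rho(y \cdot (\upsilon,\vartheta)) = f_{y}(\upsilon,\vartheta)$. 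I would then exhibit a run tree of $\change_{\Xi,\Upsilon}(\NAutomaton)$ on $t$ whose copies are indexed by $(\Upsilon \times \Theta)^{*}$: the copy indexed by $y$ sits at $t$-node $\hide_{\Theta}(y)$ in state $\rho(y)$ and, in $\delta'(\rho(y), l_{\Sigma}(\hide_{\Theta}(y)))$, picks the disjunct given by $\xi := l_{\Xi}(y)$ and $f := f_{y}$, spawning the copies indexed by $y \cdot (\upsilon,\vartheta)$. Since $l_{\Sigma}(\hide_{\Theta}(y))$ together with $l_{\Xi}(y)$ reproduces the label of $y$ in $t \oplus s$, this disjunct is available, so the construction is a legal run tree, and every branch of it reads along its direction labels a branch of $\rho$ with the identical state sequence, hence satisfies $\alpha$. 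Therefore $t$ is accepted.

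For the converse, I would take an accepting run tree $r$ of $\change_{\Xi,\Upsilon}(\NAutomaton)$ on $t$ and read off from it, by induction on depth, a map $\rho \colon (\Upsilon \times \Theta)^{*} \to Q$, a $\Xi$-labeling $l_{\Xi}$ of $(\Upsilon \times \Theta)^{*}$, and for each $y$ a copy $z_{y}$ of $r$: take $z_{\varepsilon}$ the root, $\rho(\varepsilon) := q_{0}$; if $z_{y}$ picks the disjunct given by $\xi, f$, set $l_{\Xi}(y) := \xi$, set $\rho(y \cdot (\upsilon,\vartheta)) := f(\upsilon,\vartheta)$, and let $z_{y \cdot (\upsilon,\vartheta)}$ be a child of $z_{y}$ realizing the atom ``state $f(\upsilon,\vartheta)$, direction $\upsilon$''. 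With $s := \tuple{(\Upsilon \times \Theta)^{*}, l_{\Xi}}$, the fact that $z_{y}$ used that disjunct gives that $f$ is a legal $\NAutomaton$-move at the node $y$ of $t \oplus s$ and $\rho(y \cdot (\upsilon,\vartheta)) = f(\upsilon,\vartheta)$, so $\rho$ is a run of $\NAutomaton$ on $t \oplus s$, and any branch of $(\Upsilon \times \Theta)^{*}$ traces through the $z_{y}$'s a branch of $r$ with the same state sequence, so $\rho$ is accepting. Hence $t \oplus s \in \Lang(\NAutomaton)$ and $t \in \shape_{\Xi,\Upsilon}(\Lang(\NAutomaton))$.

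The step I expect to be the real obstacle is exactly the run-tree bookkeeping used above: re-presenting an accepting run tree of the alternating automaton $\change_{\Xi,\Upsilon}(\NAutomaton)$, which lives over $\Upsilon^{*}$, so that its copies are organized as $(\Upsilon \times \Theta)^{*}$ with one $\Theta$-coordinate per branching level, which is what makes $l_{\Xi}$ a genuine labeling of the finer tree. The nuisance is that the transition of $\change_{\Xi,\Upsilon}(\NAutomaton)$ is a conjunction over $\Upsilon \times \Theta$ that collapses a repeated atom whenever $f$ sends two $\Theta$-successors of a node to the same state in the same $\Upsilon$-direction; one must allow such an atom to be served by the required number of copies (duplicating a copy is harmless), and check that this reshaping leaves every branch's state sequence unchanged and so preserves parity acceptance. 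Everything else is a routine check that, thanks to the $\hide_{\Theta}$ invariant, the transition atoms on the two sides match up.
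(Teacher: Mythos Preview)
The paper does not give its own proof of this theorem: it is quoted verbatim as \cite[Theorem 4.11]{FS05} and used as a black box in the subsequent automata construction. So there is no in-paper argument to compare against.

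Your two-inclusion argument is the standard one and is essentially correct. The key observation you isolate---that the $\Sigma$-part of the label of $t\oplus s$ at a node $y$ depends only on $\hide_{\Theta}(y)$, so that the disjunct indexed by $(\xi,f)$ in $\delta'(q,\sigma)$ exactly corresponds to ``guess that the hidden $\Xi$-label here is $\xi$ and make the $\NAutomaton$-move $f$''---is precisely the content of the construction, and your reindexing of copies by $(\Upsilon\times\Theta)^{*}$ is the right bookkeeping device. Note, incidentally, that the atom in the paper's displayed definition of $\delta'$ appears to contain a typo (it reads $(f(\upsilon,\vartheta),(\xi,\sigma))$, but the second component should be a direction in $\Upsilon$, namely $\upsilon$, since $\change_{\Xi,\Upsilon}(\NAutomaton)$ runs on $\Upsilon$-trees); your proof implicitly uses the intended reading. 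The ``nuisance'' you flag about duplicated atoms is real but, as you say, harmless: in an alternating run tree one may always serve a conjunct by several children carrying the same $(q',\upsilon)$ label, and duplicating a copy does not affect the state sequence along any branch, hence preserves parity acceptance.
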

	
	We can apply the change operation only on nondeterministic automata.
	This means that, in order to recognize the shape language of a parity alternating automaton $\Automaton$, we first need to turn it into a nondeterministic one.
	This can be done by means of two steps: we first turn $\Automaton$ into a nondeterministic Street automaton $\NAutomaton_{S}$ that recognize the same language $\Lang(\NAutomaton_{S}) = \Lang(\Automaton)$, and then turn it into a nondeterministic parity $\NAutomaton$ such that $\Lang(\NAutomaton) = \Lang(\NAutomaton_{S}) = \Lang(\Automaton)$.
	If $\Automaton$ has $n = \card{Q}$ states and $c = \card{C}$ colors, then the automaton $\NAutomaton_{S}$ has $n^{O(c \cdot n)}$ states and $O(c \cdot n)$ pairs such that $\Lang(\Automaton) = \Lang(\NAutomaton_{S})$~\cite{MS84}.
	In addition, it the nondeterministic Street automaton $\NAutomaton_{S}$ has $m$ states and $p$ pairs, we can build a nondeterministic parity automaton $\NAutomaton$ with $p^{O(p)} \cdot m$ states and $O(p)$ colors~\cite{FS05}.
	By applying these two constructions, we then transform an alternating parity automaton $\Automaton$ into a nondeterministic one $\NAutomaton$ accepting the same tree-language.
	Note that $\NAutomaton$ is of size single exponential with respect to $\Automaton$.
	Indeed, we obtain it with $n' = O(c \cdot n)^{O(c \cdot n)} = n^{O(c \cdot n)}$ states~\footnote{The last equivalence because the number $c$ of colors is bounded by the number $n$ of states.} and $c' = O(c \cdot n)$ colors.
	By $\ndet(\Automaton) = \NAutomaton$ we denote the transformation of an alternating parity automaton into a nondeterministic parity one.

	From now on, we consider closed \BQLTL formulas being of the form $\qntPref \psi = \exists Y_1 \forall X_1 \allowbreak \ldots \exists Y_n \forall X_n \psi$ with $Y_1$ and $X_n$ being possibly empty.
	Therefore, we refer to $\theta$ as a behavioral Skolem function over $\qntPref$, as the set $F = \emptyset$ is always empty.
	Moreover, we define $\hat{X_{i}} = \bigcup_{j \leq i} X_i$ and $\hat{Y_{i}} = \bigcup_{j \leq i} Y_i$, with $X = \hat{X_{0}}$ and $Y = \hat{Y_{0}}$, respectively.
	Finally, we define $\check{X_{i}} = \bigcup_{j > i} X_i$ and $\check{Y_{i}} = \bigcup_{j > i} Y_i$, respectively.

	A behavioral Skolem function $\theta$ over $\qntPref$ can be regarded as the labeling function of a $2^{Y}$-labeled $2^{X}$-tree.
	In addition, such labeling fulfills a \emph{compositional} property, as it is expressed in the following lemma.
	
	\begin{lemma}
		\label{lmm:bijection-behavioral-trees}
		Let $\qntPref = \exists Y_1 \forall X_1 \ldots \exists Y_n \forall X_n$ be a prefix quantifier.
		A $2^{Y}$-labeled $2^{X}$-tree $\theta$ is a behavioral Skolem function over $\qntPref$ iff there exist a tuple $\theta_1, \ldots, \theta_n$,  where $\theta_i$ is a $2^{Y_i}$-labeled $2^{\hat{X_{i}}}$-tree, such that 
		$\theta = \theta_1 \oplus \ldots \oplus \theta_n$.		
	\end{lemma}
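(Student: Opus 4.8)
The plan is to prove the two directions of the biconditional by unwinding the definitions of behavioral Skolem function and of tree composition $\oplus$. The key observation is that the composition $\theta_1 \oplus \cdots \oplus \theta_n$ builds a $2^Y$-labeled $2^X$-tree whose label at a node $x \in (2^X)^*$ is the union $\bigcup_i \theta_i(\hide(x))$, where each $\theta_i$ only sees the $2^{\hat X_i}$-projection of the path $x$. So the content of the lemma is exactly that ``depends only on the past and only on $\hat X_i$'' (the defining condition of behavioral Skolemness specialized to the block $Y_i$, since $\Dep_{\qntPref}(Y_i) = \hat X_i$ and $F = \emptyset$) is the same thing as ``decomposes as a $\oplus$-composition of per-block trees.''

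\textbf{Right-to-left direction.} Suppose $\theta = \theta_1 \oplus \cdots \oplus \theta_n$ with $\theta_i$ a $2^{Y_i}$-labeled $2^{\hat X_i}$-tree. First I would check that $\theta$, so defined, is a Skolem function over $\qntPref$ at all: this is immediate since the $Y_i$ are pairwise disjoint and exhaust $Y$, so the label at each node is a well-defined element of $2^Y$. Then I would verify the behavioral condition: fix $k \in \SetN$, a block $Y_i$, and two input streams $\pi_1, \pi_2 \in (2^X)^\omega$ (here $F = \emptyset$, so $\forall(\qntPref) = X$) that agree up to instant $k$ on $\Dep^\emptyset(Y_i) = \hat X_i$. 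Since $\theta_i$ reads only the $2^{\hat X_i}$-directions and, being a tree-labeling, the label at depth $k$ depends only on the length-$k$ prefix of directions, the $Y_i$-components of $\theta(\pi_1)$ and $\theta(\pi_2)$ coincide up to instant $k$; letting $k$ range over $\SetN$ gives $\theta(\pi_1)_{\rst Y_i} = \theta(\pi_2)_{\rst Y_i}$ for every $i$, which is the behavioral requirement.

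\textbf{Left-to-right direction.} Conversely, suppose $\theta$ is a behavioral Skolem function over $\qntPref$. For each block $i$, I define $\theta_i$ by reading off the $Y_i$-component: given a node $u \in (2^{\hat X_i})^*$, pick any $x \in (2^X)^*$ of the same length whose $\hat X_i$-projection is $u$, and set $\theta_i(u) \defeq \theta(x)_{\rst Y_i}$ evaluated at position $|u|$. The behavioral condition for the block $Y_i$ (agreement of inputs on $\hat X_i$ up to instant $k$ forces agreement of the $Y_i$-output up to instant $k$) is precisely what guarantees this is well-defined, i.e., independent of the choice of representative $x$. Then one checks directly from the definition of $\oplus$ that $\theta_1 \oplus \cdots \oplus \theta_n$ agrees with $\theta$ on every node: at a node $x$, the composed label is $\bigcup_i \theta_i(\hide_i(x)) = \bigcup_i \theta(x)_{\rst Y_i} = \theta(x)_{\rst Y} = \theta(x)$ (the last step since $\theta$ is a $2^Y$-labeled tree and the $Y_i$ partition $Y$).

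\textbf{The main obstacle} is bookkeeping: getting the indexing of the nested directions right and making sure the ``past-only'' clause of Definition~\ref{def:behavioral-skolem} (agreement on the segment $(0,k)$, not the whole stream) lines up with the fact that a tree-labeling at depth $k$ sees exactly the length-$k$ prefix of directions — these two notions must be pinned together carefully, and one has to be explicit that $\hide_i$ extracts exactly the $2^{\hat X_i}$-component that $\theta_i$ consumes. Once the notational correspondence between ``streams that agree on a finite prefix'' and ``tree nodes at a given depth'' is fixed, both directions are routine.
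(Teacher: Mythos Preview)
Your proposal is correct and follows essentially the same route as the paper's own proof: in both directions you (and the paper) exploit that $\Dep_{\qntPref}(Y_i)=\hat X_i$, define each $\theta_i$ by restricting $\theta$ to its $Y_i$-component and observing that behaviorality makes this independent of the non-$\hat X_i$ directions, and conversely read off the behavioral condition directly from the definition of $\oplus$. Your write-up is more explicit about the prefix/depth bookkeeping than the paper's terse argument, but the underlying decomposition is identical.
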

	
	\begin{proof}
		The proof proceeds by double implication.
		From left to right, consider a behavioral Skolem function $\theta$ and, for every $1 \leq i \leq n$, consider the $2^{Y_i}$-labeled $2^{\hat{X_{i}}}$-tree $\theta_i$, defined as $\theta_{i}(x) = \theta(x \times x')_{\rst Y_{i}}$ where $x \in (2^{\hat{X_{i}}})^{*}$ and $x' \in (2^{X \setminus \hat{X_{i}}})^{*}$.
		Note that $\Dep_{\qntPref}(Y_{i}) = \hat{X_{i}}$ and so the definition of $\theta_{i}$ over $x$ does not really depend on the values in $x'$, therefore it is well-defined.
		By applying the definition of tree composition, it easily follows that $\theta = \theta_1 \oplus \ldots \oplus \theta_n$.
		
		For the right to left direction, let $\theta_1, \ldots, \theta_{n}$ be labeled trees and consider the composition $\theta = \theta_1 \oplus \ldots \oplus \theta_n$.
		From the definition of tree composition, it follows that for every $i$, $\theta(x)_{\rst Y_{i}} = \theta_{i}(x_{\hat{X_{i}}})$, which fulfills the requirement for $\theta$ of being a behavioral Skolem function over $\qntPref$.
	\end{proof}

	We now show how to solve the satisfiability problem for \BQLTL with an automata theoretic approach.
	To do this, we first introduce some notation.
	For a list of variables $(Y_i, X_i)$, consider the quantification prefix $\check{\qntPref_{i}} \defeq \forall \check{X_{i}} \exists \check{Y_{i}}$ and then the quantification prefix $\qntPref_{i} \defeq \exists Y_1 \forall X_1 \ldots \exists Y_i \forall X_i \check{\qntPref_{i}}$.
	Intuitively, every quantification prefix $\qntPref_{i + 1}$ is obtained from $\qntPref_{i}$ by pulling the existential quantification of $Y_{i + 1}$ up before the universal quantification of $X_{i + 1}$.
	Clearly, we obtain that $\qntPref_{0} = \forall X \exists Y$ and $\qntPref_{n} = \qntPref$.
	The automata construction builds on top of this quantifier transformation.
	First, recall that the satisfiability of $\qntPref_{0} \psi$ amounts to solving the synthesis problem for $\psi$ with $X$ and $Y$ being the set of variables controlled by the environment and the system, respectively.
	Let $\Automaton_{0}$ be an alternating parity automaton that solves the synthesis problem, thus $2^{Y}$-labeled $2^{X}$-trees representing the models of $\psi$.
	Now, for every $i < n$, define $\Automaton_{i + 1} \defeq \change_{2^{Y_i}, 2^{\hat{X_{i}}}}(\ndet(\Automaton_{i}))$.
	We have the following.
	
	\begin{theorem}
		\label{thm:automata-sequence}
		For every $i \leq n$, the formula $\qntPref_{i} \psi$ is satisfiable iff $\Lang(\Automaton_{i}) \neq \emptyset$, where 
		
		\begin{itemize}
			\item
				$\Automaton_{0}$ is the alternating parity automaton that solves the synthesis problem for $\psi$ with system variables $Y$ and environment variables $X$, and
				
			\item
				$\Automaton_{i + 1} \defeq \change_{2^{Y_i}, 2^{\hat{X_{i}}}}(\ndet(\Automaton_{i}))$, for every $i < n$.
		\end{itemize}
		
	\end{theorem}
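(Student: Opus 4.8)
The plan is to argue by induction on $i \le n$, carrying through a strengthened invariant: that $\Lang(\Automaton_{i})$ is, under the standard identification of a labeled tree with the Skolem function it encodes, exactly the set of behavioral Skolem functions over $\qntPref_{i}$ all of whose branches satisfy $\psi$. Since $\qntPref_{i}\psi$ is closed, Definition~\ref{def:BQLTL-semantics} then makes ``$\qntPref_{i}\psi$ is satisfiable'' synonymous with ``$\Lang(\Automaton_{i}) \neq \emptyset$'', which is the statement to be proved. For the base case $i = 0$ we have $\qntPref_{0} = \forall X \exists Y$, so $\Dep_{\qntPref_{0}}(Y) = X$ and, unwinding Definition~\ref{def:behavioral-skolem}, a behavioral Skolem function over $\qntPref_{0}$ is precisely a (causal) strategy $(2^{X})^{*} \to 2^{Y}$, i.e.\ a $2^{Y}$-labeled $2^{X}$-tree; by construction $\Automaton_{0}$ is the alternating parity automaton whose language consists of exactly those such trees all of whose branches satisfy $\psi$, so the invariant holds at level $0$.

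For the inductive step, assume the invariant at level $i < n$. I would first rewrite the right-hand side: since $\ndet$ is language-preserving, Theorem~\ref{thm:shape-automaton} applied to $\ndet(\Automaton_{i})$ gives
\[
\Lang(\Automaton_{i+1}) \;=\; \Lang\big(\change_{2^{Y_{i}},2^{\hat{X_{i}}}}(\ndet(\Automaton_{i}))\big) \;=\; \shape_{2^{Y_{i}},2^{\hat{X_{i}}}}\big(\Lang(\Automaton_{i})\big).
\]
It then remains to show that, modulo the tree/Skolem-function identification, applying $\shape_{2^{Y_{i}},2^{\hat{X_{i}}}}$ to the set of behavioral Skolem functions over $\qntPref_{i}$ that satisfy $\psi$ everywhere yields exactly the set of behavioral Skolem functions over $\qntPref_{i+1}$ that satisfy $\psi$ everywhere. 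This is the point at which Lemma~\ref{lmm:bijection-behavioral-trees} is invoked, in the version generalized from a uniform behavioral prefix to the mixed prefixes $\qntPref_{i}$: a behavioral Skolem function for $\qntPref_{i+1}$ decomposes as $\theta' \oplus \eta$, where $\eta$ is the component producing the existential block whose quantifier $\qntPref_{i+1}$ has moved in front of its matching universal block — a labeled tree over the reduced direction set recorded by $2^{\hat{X_{i}}}$ — and $\theta'$ is a behavioral Skolem function for $\qntPref_{i}$; conversely, for any decomposition $\theta = \theta' \oplus \eta$ witnessing membership in the shape set, the induction hypothesis gives $\theta' \in \Lang(\Automaton_{i})$ and makes $\theta'$ a behavioral Skolem function for $\qntPref_{i}$, while the very fact that $\theta$ factors as $\theta' \oplus \eta$ forces the relocated block to depend only on the universal prefix tracked by $2^{\hat{X_{i}}}$, which is exactly its dependency set in $\qntPref_{i+1}$. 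Since $\theta$ and the associated $\theta'$ induce the same trace along each branch, ``satisfies $\psi$ on every branch'' transfers in both directions, and the invariant holds at level $i+1$.

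The main obstacle I anticipate is the encoding and indexing bookkeeping inside the inductive step: one has to pin down, at each level $i$, which propositional variables are realized as tree \emph{directions} and which as tree \emph{labels}; to state and prove the extension of Lemma~\ref{lmm:bijection-behavioral-trees} to the non-uniform prefixes $\qntPref_{i}$ (which mix already-relocated blocks with the still-innermost universal--existential tail); and to verify that the composition operator underlying $\shape$ reproduces, for the relocated block, precisely its $\Dep_{\qntPref_{i+1}}$ set rather than an off-by-one variant — paying particular attention to the degenerate cases where $Y_{1}$ or $X_{n}$ is empty and a block or a direction component collapses.
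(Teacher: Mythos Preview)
Your approach is essentially the paper's: induction on $i$ with a strengthened invariant describing $\Lang(\Automaton_{i})$, base case given by the synthesis automaton, and inductive step via language-preservation of $\ndet$, Theorem~\ref{thm:shape-automaton}, and (a generalization of) Lemma~\ref{lmm:bijection-behavioral-trees}. One refinement: the invariant as you state it does not type-check against the $\change$ operator, which strips the $2^{Y_{i}}$-component from the label alphabet; so $\Automaton_{i}$ cannot literally accept full behavioral Skolem functions over $\qntPref_{i}$. The paper's invariant is instead that $\Automaton_{i}$ accepts the \emph{residual} $2^{\check{Y_{i}}}$-labeled trees $\theta_{i}$ for which there exist components $\theta_{1},\ldots,\theta_{i-1}$ (one per already-relocated block) such that $\theta_{1}\oplus\cdots\oplus\theta_{i}$ is a behavioral Skolem function over $\qntPref_{i}$ satisfying $\psi$ --- this is exactly the bookkeeping you flag as the main obstacle, and once the invariant is phrased this way the decomposition in your inductive step goes through unchanged.
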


	\begin{proof}
	We prove the theorem by induction through a stronger statement.
	We show that the automaton $\Automaton_{i}$ accepts $2^{\check{Y_{i}}}$-labeled $2^{X}$-trees $\theta_{i}$ for which there exists a sequence $\theta_{1}, \ldots, \theta_{i-1}$ such that $\theta_1 \oplus \ldots \oplus \theta_{i}$ is a behavioral Skolem function over $\qntPref_{i}$ that satisfies $\qntPref_{i} \psi$.
	
	For the base case, the statement boils down to the fact that the automaton $\Automaton_{0}$ accepts the $2^{Y}$-labeled $2^{X}$-trees that solve the synthesis problem for $\psi$.
	
	For the induction case, assume that the statement is true for some $i$.
	Thus, the automaton $\Automaton_{i}$, and then its nondeterministic version $\ndet(\Automaton_{i})$ accept $2^{\check{Y_{i}}}$-labeled $2^{X}$-trees $\theta_{i}$ for which there exists a sequence $\theta_{1}, \ldots, \theta_{i-1}$ such that $\theta_1 \oplus \ldots \oplus \theta_{i}$ is a behavioral Skolem function that satisfies $\qntPref_{i} \psi$.
	Now, consider the automaton $\Automaton_{i + 1} = \change_{2^{Y_i}, 2^{\hat{X_{i}}}}(\ndet(\Automaton_{i}))$.
	From Theorem~\ref{thm:shape-automaton}, it holds that it accepts $2^{\check{Y_{i + 1}}}$-labeled $2^{X}$-trees $\theta_{i + 1}$ that are in $\shape_{2^{Y_i + 1}, 2^{\hat{X_{i+1}}}}(\Lang(\Automaton_{i}))$ and so for which there exists a $2^{Y_i}$-labeled $2^{\hat{X_{i+1}}}$-tree $\theta_{i}'$ such that $\theta_{i}' \oplus \theta_{i + 1} \in \Lang(\Automaton_{i})$.
	Observe that now the variables $Y_{i}$ are handled over a $2^{\hat{X_{i}}}$-tree and so they do not depend on variables in $\check{X_{i}}$ anymore.
	This implies that the composition $\theta_1 \oplus \theta_{i - 1} \oplus \theta'_{i} \oplus \theta_{i + 1}$ is a behavioral Skolem over $\qntPref_{i + 1}$ that satisfies $\qntPref_{i + 1} \psi$, and the statement is proved.
\end{proof}

	Theorem~\ref{thm:automata-sequence} shows that the automata construction is correct.
	The complexity of solving the satisfiability of \BQLTL is stated below.
		
	\begin{theorem}
		\label{thm:behavioral-satisfiability-complexity}
		The satisfiability problem of a \BQLTL formula of the form $\varphi = \exists Y_1 \forall X_1 \ldots \allowbreak \exists Y_n \forall X_n \psi$ can be solved in $(n+1)$-EXPTIME-complete.
	\end{theorem}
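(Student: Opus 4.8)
The plan is to establish the two matching bounds separately. For the upper bound I would run the automata construction of Theorem~\ref{thm:automata-sequence} and carefully track the blow-ups; for the lower bound I would reduce from the distributed realizability problem for ordered acyclic architectures (Theorem~\ref{thm:decidable-dist-synt}), which is exactly the setting encoded by a quantifier prefix of the shape $\exists Y_1\forall X_1\ldots\exists Y_n\forall X_n$.

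\emph{Upper bound.} By Theorem~\ref{thm:behavioral-satisfiability} we may assume $\varphi=\qntPref\psi$ is closed, and by Theorem~\ref{thm:automata-sequence} it is satisfiable iff $\Lang(\Automaton_n)\neq\emptyset$. I would take $\Automaton_0$ to be the classical alternating (universal co-B\"uchi) tree automaton solving \LTL realizability of $\psi$ with system variables $Y$ and environment variables $X$: it has $2^{O(\card{\psi})}$ states and a bounded number of colors. The heart of the argument is the size analysis of the recurrence $\Automaton_{i+1}=\change_{2^{Y_i},2^{\hat{X_{i}}}}(\ndet(\Automaton_i))$. From the construction recalled before Theorem~\ref{thm:automata-sequence}, $\ndet$ sends an automaton with $q$ states and $c$ colors to one with $q^{O(c\cdot q)}$ states and $O(c\cdot q)$ colors, i.e.\ a single exponential jump in the states and, since $c\le q$, only a polynomial jump in the colors, whereas $\change$ alters neither the state set nor the acceptance condition. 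A straightforward induction then gives: for every $i\ge 1$, $\Automaton_i$ has a number of states bounded by a tower of exponentials of height $i+1$ in $\card{\psi}$ and a number of colors bounded by a tower of height $i$; crucially the color count always stays a full exponential below the state count. For the concluding emptiness test I would \emph{not} re-nondeterminize $\Automaton_n$ (which would waste one exponential), but instead use $\Lang(\Automaton_n)=\shape_{2^{Y_{n-1}},2^{\hat{X_{n-1}}}}(\Lang(\ndet(\Automaton_{n-1})))$ (Theorem~\ref{thm:shape-automaton}) and decide nonemptiness directly on the nondeterministic automaton $\ndet(\Automaton_{n-1})$, via a parity-game emptiness check in which the guesses on the $2^{Y_{n-1}}$-component are forced to be oblivious to the $2^{\hat{X_{n-1}}}$-direction. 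Since emptiness of a nondeterministic parity tree automaton is polynomial in its states and exponential in its colors, and $\ndet(\Automaton_{n-1})$ has respectively $(n+1)$-fold and $n$-fold exponentially many, the whole procedure runs in $(n+1)$-fold exponential time, i.e.\ in $(n+1)$-EXPTIME.

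\emph{Lower bound.} For hardness I would exploit the correspondence between behavioral Skolem functions and tuples of strategy trees from Lemma~\ref{lmm:bijection-behavioral-trees}: the \BQLTL formula $\qntPref\psi$ with $\qntPref=\exists Y_1\forall X_1\ldots\exists Y_n\forall X_n$ is (after a routine adjustment of $\psi$ handling the free environment variables) satisfiable iff the \LTL specification $\psi$ is realizable over the ordered acyclic architecture $A_{\qntPref}$ built from that prefix (see Figure~\ref{fig:quant-to-architecture}), which has exactly $n+1$ processes — the environment plus one process per existential block, each strictly less informed than the previous one. Instantiating Theorem~\ref{thm:decidable-dist-synt} with $\card{P}=n+1$ then yields $(n+1)$-EXPTIME-hardness of \BQLTL satisfiability for formulas with $n$ alternations of this shape. (For the very small fragments $\Sigma_1^{\BQLTL}$ and below the exact complexity is already pinned down — and in fact is lower — by Corollary~\ref{cor:no-behavioral-fragments-complexity}, so the hardness claim is of genuine interest only as $n$ grows.)

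\emph{Main obstacle.} The delicate part is essentially all in the bookkeeping: one must verify that each application of $\ndet$ adds exactly one level to the tower of the state count while leaving the color count one level behind, and — at the last step — that routing the final emptiness test through the $\shape$ operation (Theorem~\ref{thm:shape-automaton}) keeps it inside the $(n+1)$-st exponential instead of spilling into an $(n+2)$-nd one. On the lower-bound side, the care lies in matching the alternation pattern of $\varphi$ to the information hierarchy of $A_{\qntPref}$ so that the process count comes out as precisely $n+1$.
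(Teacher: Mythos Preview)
Your proposal is correct and follows the same approach as the paper on both halves: the upper bound via the automata chain of Theorem~\ref{thm:automata-sequence} followed by an emptiness test, and the lower bound via distributed synthesis over hierarchically ordered architectures from~\cite{PR90}. The only differences are in presentation---the paper's size analysis is a one-liner (``$\Automaton_n$ is $n$-fold exponential in $\psi$ and emptiness of the alternating automaton costs one more'') where you carry out the finer state/color bookkeeping and explicitly guard against a spurious extra exponential at the last step, and for hardness the paper writes the reduction in the explicit direction (from an ordered architecture to a formula, taking $Y_i=O_i$ and $X_i=I_i\setminus I_{i-1}$) rather than going through the architecture $A_{\qntPref}$ and the distributed-synthesis theorem you invoke, which in fact sit behind an \texttt{\textbackslash endinput} in the source and do not appear in the compiled paper.
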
	
	
	\begin{proof}
		From Theorem~\ref{thm:automata-sequence}, we reduce the problem to the emptiness of the automaton $\Automaton_{n}$, whose size is $n$-times exponential in the size of $\psi$, as we apply $n$ times the nondeterminisation, starting from the automaton $\Automaton_{\psi}$ that solves the synthesis problem for $\psi$.
		As the emptiness of the alternating parity automaton $\Automaton_{n}$ involves another exponential blow-up, we obtain that the overall procedure is $(n+1)$-EXPTIME.
		
		A matching lower-bound is obtained from the synthesis of distributed synthesis for hierarchically ordered architecture processes with \LTL objectives, presented in~\cite{PR90}, that is $(n+1)$-EXPTIME-complete with $n$ being the number of processes.
		Indeed, every process $p_i$ in such architecture synthesizes a strategy represented by a $2^{O_i}$-labeled $2^{I_i}$-tree, with $O_i$ being the output variables and $I_i$ the input variables.
		An architecture $A$ is hierarchically ordered if $I_i \subseteq I_{i + 1}$, for every process $p_i$.
		Thus, for an ordered architecture $A$ and an \LTL formula $\psi$, consider the variables $Y_{i} = O_{i}$ and $X_{i} = I_i \setminus I_{i -1}$ and the \BQLTL formula $\varphi = \exists Y_1 \forall X_1 \ldots \exists Y_n \forall X_n \psi$.
		A behavioral Skolem function $\theta$ that makes $\varphi$ true corresponds to an implementation for the architecture that realizes $(A, \psi)$.
		Moreover, the satisfiability of $\varphi$ is $(n + 1)$-EXPTIME, matching the lower-bound complexity of the realizability instance.
	\end{proof}

	\section{Weak-Behavioral QLTL}
	\label{sec:weak-behavioral-semantics}
	
	We now introduce \emph{weak-behavioral} \QLTL, denoted \WBQLTL, that can be  used to model systems with \emph{full observability} over the executions history.
	In such system every action is \emph{public}, meaning that it is visible to the entire system once it is occurred.
	In order to model this, we introduce an alternative definition of Skolem function, which we call here \emph{weak-behavioral}.
	We study the satisfiability problem of \WBQLTL and show that its complexity is 2-EXPTIME-complete via a reduction to a Multi-Player Parity Game~\cite{MMS16} with a double exponential number of states and a (single) exponential number of color.
		
	Analogously to the case of \BQLTL, the logic \WBQLTL is defined in a Skolem-based approach.
	
	\begin{definition}
		\label{def:weak-behavioral-skolem}
		
		For a given quantification prefix $\qntPref$ defined over a set $\Var(\qntPref) \subseteq \Var$ of propositional variables and a set $F$ of variables not occurring in $\qntPref$, a function $\theta: (2^{F \cup \forall(\qntPref)})^{\omega} \to (2^{\exists(\qntPref)})^{\omega}$ is a \emph{weak-behavioral Skolem function} over $(\qntPref, \free(\varphi))$ if, for all $\pi_1, \pi_2 \in (2^{F \cup \forall(\qntPref)})^{\omega}$, $k \in \SetN$, and $Y \in \exists(\qntPref)$, it holds that
		\begin{center}
			$\theta(\pi_{1})(0,k) = \theta(\pi_{2})(0,k)$ and $\pi_{1}(k + 1)_{\rst \Dep^{F}(Y)} =  \pi_{2}(k + 1)_{\rst \Dep^{F}(Y)}$ implies $\theta(\pi_{1})_{\rst Y} = \theta(\pi_{2})_{\rst Y}$.
		\end{center}
		
	\end{definition}

	In weak-behavioral Skolem functions, the evaluation of existential variables $Y$ at every instant depends not only on the current evaluation of $\Dep^{F}(Y)$ but also the evaluation history of each variable.
	The semantics of \WBQLTL is given below.

	\begin{definition}
		\label{def:weak-behavioral-semantics}
		
		
		A \WBQLTL formula $\varphi = \qntPref \psi$ is true over an interpretation $\pi$ at an instant $i$, written $\pi, i \wbmodels \varphi$, if there exists a weak-behavioral Skolem function $\theta$ over $(\qntPref, \free(\varphi))$ such that $\theta(\pi \Cup \pi'), i \cmodels \psi$, for every $\pi' \in (2^{F \cup \forall(\qntPref)})^{\omega}$.
		
	\end{definition}
	
	Differently from behavioral, \WBQLTL is not a special case of \QLTL.
	As a matter of fact, they are incomparable.
	Consider again the formula 
	This is due to the fact that the existentially quantified variables depend, for standard Skolem functions, on the future of their dependencies, whereas, weak-behavioral functions, on the whole past of the computation, including the non-dependencies.
	
	Consider again the formula $\varphi = \forall x \exists y (\G x \leftrightarrow y)$.
	This is not satisfiable as a \WBQLTL formula, as this semantics still does not allow existential variables to depend on the future interpretation of the universally quantified ones.
	On the other hand, the formula $\varphi = \exists y \forall x (\F x \leftrightarrow \F y)$ is satisfiable as a \WBQLTL.
	Indeed, the existentially quantified variable $y$ can determine its value on an instant $i$ by looking at the entire history of assignments, including those for $x$, although only on the past but not the present instant $i$ itself.
	However, the semantics of both \QLTL and \BQLTL does not allow such dependence, which makes $\varphi$ non satisfiable as both \QLTL and \BQLTL.
	
	
	\begin{lemma}
		\label{lmm:semantics-comparison}
		
		There exists a satisfiable \BQLTL formula that is not satisfiable as \WBQLTL.
		Moreover, there exists a satisfiable \WBQLTL formula that is not satisfiable as \BQLTL.		
	\end{lemma}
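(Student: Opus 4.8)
The plan is to prove the two incomparability claims separately, in each case by producing one witness formula and, for that formula, settling the existence or non-existence of a Skolem function of the appropriate kind under each semantics.

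For the claim that some \WBQLTL-satisfiable formula is not \BQLTL-satisfiable I would take the formula $\varphi_{\mathrm{WB}} := \exists y\,\forall x\,(\F x \leftrightarrow \F y)$ singled out in the preceding discussion. To establish $\wbmodels \varphi_{\mathrm{WB}}$ I would exhibit an explicit weak-behavioral Skolem function: using that such a function may consult the whole past of the computation, let it set $y$ to \emph{true} at every instant strictly after the first instant at which $x$ is \emph{true}, and to \emph{false} elsewhere; then $\F y$ holds exactly when $\F x$ holds, and the value chosen for $y$ at instant $k+1$ is a function of the prefix through instant $k$, so the function meets Definition~\ref{def:weak-behavioral-skolem}. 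For the negative half I would invoke Theorem~\ref{thm:no-behavioral-dep-equivalence}: since $\varphi_{\mathrm{WB}} \in \Sigma_{1}^{\BQLTL}$, it is \BQLTL-satisfiable iff it is Skolem-satisfiable, hence (by Theorem~\ref{thm:skolemization}) iff it is classically satisfiable; but $\Dep(y)=\emptyset$, so every Skolem function assigns $y$ a fixed trace, and no fixed trace can satisfy $\F x\leftrightarrow\F y$ both against the all-false interpretation of $x$ and against an eventually-true one. Hence $\varphi_{\mathrm{WB}}$ is not \BQLTL-satisfiable.

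For the claim that some \BQLTL-satisfiable formula is not \WBQLTL-satisfiable I would look for a formula whose satisfying interpretations force the existential variable, at each step, to reproduce the value of a universally quantified variable at the \emph{previous} step: a behavioral Skolem function does this by reading the (past of the) dependency it is allowed to see, whereas a weak-behavioral function cannot, because at the instant it must commit $y$ the bit it needs lies neither among the outputs it has already produced nor in the present value of its dependency. A natural candidate is $\varphi_{\mathrm{B}} := \forall x\,\exists y\,\G(x\leftrightarrow\X y)$, whose matrix pins $y$ at instant $j+1$ to the value of $x$ at instant $j$ while leaving $y(0)$ free. For \BQLTL-satisfiability I would check that the function $\theta$ defined by $\theta(\pi)(0):=\emptyset$ and, for $j\ge 0$, $\theta(\pi)(j+1):=\{y\}$ exactly when $x\in\pi(j)$ is a behavioral Skolem function over the prefix $\forall x\,\exists y$ (its value at instant $k$ depends only on $\pi_{\rst x}(0,k)$, and $\Dep(y)=\{x\}$) and satisfies the matrix on every $\pi$, so $\bmodels\varphi_{\mathrm{B}}$. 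For non-satisfiability under \WBQLTL I would argue by indistinguishability: given a putative weak-behavioral $\theta$ realising $\varphi_{\mathrm{B}}$ and an index $j$, take two interpretations $\pi_1,\pi_2$ that agree on $x$ at positions $0,\dots,j-1$ and at $j+1$ but disagree at $j$; then $\theta$ produces the same outputs on $\pi_1$ and $\pi_2$ through instant $j$ and sees the same present dependency value at $j+1$, so it is forced to give the same value for $y$ at $j+1$, contradicting that this value must equal $x$ at $j$.

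The main obstacle is the negative half of each direction: ruling out \emph{every} admissible Skolem function rather than a particular one. Here the argument is sensitive to exactly which portion of the trace a behavioral, respectively weak-behavioral, Skolem function may consult at a given instant --- in particular to how Definition~\ref{def:weak-behavioral-skolem} treats the first instant and whether a weak-behavioral function observes the raw input history or only the outputs it has itself generated --- so the witnesses must be chosen so that the one bit the function provably lacks at the critical instant is precisely the bit the matrix demands. Once the collapsing pair $\pi_1,\pi_2$ is pinned down the contradiction is immediate; for $\varphi_{\mathrm{WB}}$ the reduction through Theorem~\ref{thm:no-behavioral-dep-equivalence} avoids this entirely, but for $\varphi_{\mathrm{B}}$ no such shortcut applies, which is where the genuine work sits.
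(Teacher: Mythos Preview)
Your treatment of the second direction is correct and uses exactly the witness $\exists y\,\forall x\,(\F x\leftrightarrow\F y)$ that the paper singles out in the discussion preceding the lemma.

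The first direction, however, has a genuine gap, and it sits precisely where you anticipated in your last paragraph. The paper's intended reading of Definition~\ref{def:weak-behavioral-skolem} is that a weak-behavioral function observes the \emph{full} past of the computation: the text says explicitly that ``the process can base its choices according to the past evaluation of all variables and the present evaluation of the depending ones'', and the identification of $\theta(\pi)$ with $\pi\Cup\theta(\pi)$ is in force in the hypothesis $\theta(\pi_1)(0,k)=\theta(\pi_2)(0,k)$. Under this reading your indistinguishability argument collapses: if $\pi_1,\pi_2$ disagree on $x$ at position $j$ then already $\theta(\pi_1)(0,j)\neq\theta(\pi_2)(0,j)$ as full histories, so the weak-behavioral constraint at instant $j+1$ simply does not apply. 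Concretely, the very function you exhibit for \BQLTL-satisfiability of $\varphi_{\mathrm{B}}$ --- set $y(j+1):=x(j)$ --- is itself weak-behavioral, since $x(j)$ is part of the visible history at instant $j+1$; hence $\varphi_{\mathrm{B}}$ is \WBQLTL-satisfiable and cannot separate the two semantics.

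The obstruction is structural rather than a matter of choosing a cleverer witness. At instant $k$ a weak-behavioral function sees the whole history through $k-1$ together with $\Dep(y)$ at $k$, and this contains $\Dep(y)(0,k)$, which is everything a behavioral function sees; every behavioral Skolem function is therefore weak-behavioral, so \BQLTL-satisfiability implies \WBQLTL-satisfiability and no formula can witness the first direction as literally stated. The paper's discussion immediately before the lemma in fact compares \WBQLTL with \emph{classic} \QLTL, not with \BQLTL, and supplies $\forall x\,\exists y\,(\G x\leftrightarrow y)$ --- classically satisfiable (Lemma~\ref{lmm:behavioral-implies-classic}) but neither \BQLTL- nor \WBQLTL-satisfiable --- as the intended first witness; the appearance of \BQLTL in the lemma statement appears to be a slip.
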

	
	\section{Weak-Behavioral QLTL Satisfiability}
		
	We now address the satisfiability problem for \WBQLTL by showing a reduction to multi-agent parity games~\cite{MMS16}.
	Intuitively, a \WBQLTL formula of the form $\varphi = \qntPref \psi$, with $\psi$ being an \LTL formula, establishes a multi-player parity game with $\psi$ determining the parity acceptance condition and $\qntPref$ setting up the Player's controllability and team side.
	In order to present this result, we need some additional definition.
	
	An $\omega$-word over an alphabet $\Sigma$ is a special case of a $\Sigma$-labeled $\Upsilon$-tree where the set of directions is a singleton.
	Being the set $\Upsilon$ irrelevant, an $\omega$-word is also represented as an infinite sequence over $\Sigma$.
	The tree automata accepting $\omega$-words are also called \emph{word automata}.
	Word automata are a very useful way to (finitely) represent all the models of an \LTL formula $\psi$.
	As a matter of fact, for every \LTL formula $\psi$, there exists a deterministic parity word automaton $\DAutomaton_{\psi}$ whose language is the set of interpretation on which $\psi$ is true.
	The size of such automaton is double-exponential in the length of $\psi$.
	The following theorem gives precise bounds.
	
	\begin{lemma}[\cite{Pit07}]
		\label{lmm:ltl-to-parity}
		For every \LTL formula $\psi$ over a set $\Var$ of variables, there exists a \emph{deterministic parity automaton} $\DAutomaton_{\psi} = \tuple{2^{\Var}, Q, q_0, \delta, \alpha}$ of size double-exponential w.r.t. $\psi$ and a (single) exponential number of priorities such that $\Lang(\DAutomaton_{\psi}) = \set{\pi \in (2^{\Var})^{\omega}}{\pi \cmodels \psi}$.
	\end{lemma}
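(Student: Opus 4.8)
The plan is to obtain $\DAutomaton_{\psi}$ by chaining together the two classical ingredients of the \LTL-to-automata pipeline: first translate $\psi$ into a nondeterministic \buchi word automaton, and then determinize that automaton into a deterministic parity one, keeping careful track of the blow-up incurred at each stage so as to meet the claimed double-exponential state bound together with the single-exponential priority bound.

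First I would invoke the Vardi--Wolper construction to build an \NBW $\NAutomaton_{\psi}$ over the alphabet $2^{\Var}$ with $\Lang(\NAutomaton_{\psi}) = \set{\pi \in (2^{\Var})^{\omega}}{\pi \cmodels \psi}$, whose states are the (locally consistent, temporally closed) subsets of the Fischer--Ladner closure of $\psi$ and whose \buchi acceptance tracks fulfilment of the until-eventualities; this yields $\card{Q(\NAutomaton_{\psi})} = 2^{O(\size{\psi})}$. Then I would apply Piterman's determinization procedure to $\NAutomaton_{\psi}$: starting from an \NBW with $m$ states it produces an equivalent \DPW with $2^{O(m \log m)}$ states and $O(m)$ priorities, the crucial feature being that the index of the parity condition is only \emph{linear} in $m$. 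Plugging in $m = 2^{O(\size{\psi})}$ gives a \DPW $\DAutomaton_{\psi}$ with $2^{O(m \log m)} = 2^{2^{O(\size{\psi})}}$ states, i.e.\ double-exponential in $\size{\psi}$, and with $O(m) = 2^{O(\size{\psi})}$ priorities, i.e.\ single-exponential in $\size{\psi}$, exactly as claimed.

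Correctness of the resulting automaton is then immediate: determinization preserves the accepted language, so $\Lang(\DAutomaton_{\psi}) = \Lang(\NAutomaton_{\psi}) = \set{\pi \in (2^{\Var})^{\omega}}{\pi \cmodels \psi}$. I expect the main obstacle to be precisely the priority count: a naive Safra determinization followed by a Rabin- or Streett-to-parity translation risks making the number of priorities exponential (it would scale with the number of Safra-tree nodes), which would be too weak here. The point that rescues the bound is the refined analysis of Piterman's compact Safra (history) trees, which simultaneously bounds the number of tree nodes \emph{and} the parity index linearly in $m$; any modern determinization delivering an $O(m)$-index parity acceptance (e.g.\ the Schewe or K\"ahler--Wilke variants) would serve equally well.
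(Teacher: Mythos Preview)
Your proposal is correct and gives exactly the standard two-step argument (Vardi--Wolper \NBW construction followed by Piterman's determinization with linear parity index) that underlies the cited result. Note, however, that the paper does not prove this lemma at all: it is stated as a black-box result imported from \cite{Pit07}, so there is no ``paper's own proof'' to compare against---your sketch is precisely the content of that reference.
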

	
	A multi-player parity game is a tuple $\Game= \tuple{\Players, (\Actions_{i})_{i \in \Players}, \States, s_{0}, \lambda, \trnFun}$ where
	\begin{inparaenum}[(i)]
		\item $\Players = \{0, \ldots, n\}$ is a set of players;
		\item $\Actions_{i}$ is a set of actions that player $i$ can play;
		\item $\States$ is a set of states with $s_0$ being a designated initial state;
		\item $\lambda: \States \to C$ is a coloring function, assigning a natural number in $C$ to each state of the game;
		\item $\trnFun: \States \times (\Actions_{1} \times \ldots \times \Actions_{n}) \to \States$ is a transition function that prescribe how the game evolves in accordance with the actions taken by the players.
	\end{inparaenum}

	Players identified with an even index are the Even team, whereas the other are the Odd team.
	Objective of the Even team is to generate an infinite play over the set of states whose coloring fulfills the parity condition established by $\lambda$.
	A strategy for Player $i$ of the Even team is a function $\strategy_{i}: \States^{*} \times (\Actions_{1} \times \Actions_{i - 1}) \to \Actions_{i}$, that determines the action to perform in a given instant according to the past history and the current actions of players that perform their choices before $i$.
	
	A tuple of strategies $\tuple{\strategy_{0}, \strategy_{2}, \ldots}$ for the Even team is \emph{winning} if every play that is generated by that, no matter what the Odd team responds, fulfills the parity condition.
	
	Now, consider a \WBQLTL formula of the form $\varphi = \exists X_0 \forall X_1  \ldots \exists X_{n - 1} \forall X_{n} \psi$, with $X_0, X_{n}$ being possibly empty, and $\DAutomaton_{\psi} = \tuple{2^{\Var}, Q, q_0, \delta, \alpha}$ being the \DPW that recognizes the interpretations satisfying $\psi$, with $\alpha = \tuple{F_{0}, F_{1}, \ldots, F_{k}}$.
	Then, consider the multi-player parity game $\Game_{\varphi} = \tuple{\Players, (\Actions_{i})_{i \in \Players}, \States, s_{0}, \lambda, \trnFun}$ where
	\begin{inparaenum}[(i)]
		\item $\Players = \{0, \ldots, n\}$;
		\item $\Actions_{i} = 2^{X_{i}}$ for each $i \in \Players$;
		\item $\States = Q$ with $s_{0} = q_{0}$;
		\item $\lambda: \States \to \SetN$ such that $\lambda(s) = \arg_{j}\{q \in F_{j} \}$;
		\item $\trnFun = \delta$.
	\end{inparaenum}
	The next theorem provides the correctness of this construction.
	
	\begin{theorem}
		\label{thm:wbqltl-satisfiability}
		A \WBQLTL formula $\varphi$ is satisfiable iff there exists a winning strategy for the Even team in the multi-player parity game $\Game_{\varphi}$.
	\end{theorem}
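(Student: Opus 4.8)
The plan is to read off a dictionary between plays of $\Game_\varphi$ and interpretations of $\varphi$, and then prove the biconditional by converting a witness on one side into a witness on the other; one direction will be a direct simulation and the other will need a small extra argument. First I would fix the \DPW $\DAutomaton_\psi = \tuple{2^{\Var},Q,q_0,\delta,\alpha}$ of Lemma~\ref{lmm:ltl-to-parity} and note that a play of $\Game_\varphi$ is a sequence of states $s_0 s_1 \cdots$ together with joint actions $\vec a^0 \vec a^1 \cdots$, where $\vec a^k = \bigcup_{i \in \Players} a^k_i \in 2^{\Var}$ and $s_{k+1} = \delta(s_k, \vec a^k)$; hence the interpretation $\pi$ with $\pi(k) = \vec a^k$ has $s_0 s_1 \cdots$ as its (unique, by determinism) run in $\DAutomaton_\psi$, and by the choice of $\lambda$ the play meets the parity condition iff $\DAutomaton_\psi$ accepts $\pi$ iff $\pi \cmodels \psi$. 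On the syntactic side, the existential blocks of $\qntPref$ are exactly the even-indexed $X_i$ (the Even team), the universal ones are the Odd team, and $\Dep_{\qntPref}(X_i) = \{ X_j : j < i \text{ odd} \}$ is precisely the set of Odd actions visible to player $i$ inside a round; under the identification $\theta(\pi') \equiv \pi' \Cup \theta(\pi')$, the prefix $\theta(\pi')(0,k)$ is the prefix of joint actions, which again by determinism fixes the state prefix $s_0 \cdots s_{k+1}$.

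For the implication from a winning profile to satisfiability I would argue directly. Given a winning Even profile $\tuple{\strategy_0, \strategy_2, \ldots}$, define $\theta$ on an input $\pi' \in (2^{\forall(\qntPref)})^\omega$ by simulating $\Game_\varphi$: at each round every Odd player echoes its part of $\pi'$, and every Even player $i$ plays $\strategy_i$ fed with the state prefix produced so far and with the already-fixed moves of the players of index $< i$ in the current round (Odd ones read off $\pi'$, Even ones obtained by recursion on the index). This is a well-founded recursion, so $\theta(\pi') \in (2^{\exists(\qntPref)})^\omega$ is well defined. Next I would check that $\theta$ is weak-behavioral: if two inputs induce the same prefix of joint actions through round $k$ and agree on $\Dep_{\qntPref}(X_i)$ in round $k+1$, then the state prefixes through $s_{k+1}$ coincide and, by a side induction on the index (using that $\Dep_{\qntPref}(X_i)$ contains every odd $X_j$ with $j < i$), so do the round-$(k+1)$ moves of every player of index $< i$; hence $\strategy_i$ returns the same value for $X_i$. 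Finally, for every $\pi'$ the simulated run is a legal play against a particularly simple Odd strategy, so it meets the parity condition and, by the dictionary, $\pi' \Cup \theta(\pi') \cmodels \psi$; since $\varphi$ is closed, $\theta$ witnesses that $\varphi$ is \WBQLTL-satisfiable.

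For the converse I would start from a weak-behavioral Skolem function $\theta$ with $\theta(\pi') \cmodels \psi$ for all $\pi'$. By Definition~\ref{def:weak-behavioral-skolem} this is, essentially verbatim, a winning Even strategy in the \emph{full-information} variant of $\Game_\varphi$, in which each player additionally observes the whole prefix of joint actions rather than only the prefix of visited states, with Even player $i$ simply outputting $\theta$'s choice for $X_i$ --- a choice that depends only on the earlier joint actions and on the current-round values of $\Dep_{\qntPref}(X_i)$, both of which such a player sees. The remaining task, which I expect to be the main obstacle, is to turn this into a profile of $\Game_\varphi$ proper: one whose Even strategies depend on the history only through the reached state $s_k \in Q$. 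This descent is legitimate because the parity objective is prefix-independent and, $\DAutomaton_\psi$ being deterministic, the reached state is a sufficient statistic of the past for $\psi$ --- concretely, whenever two histories reach the same state one commits once and for all to the continuation used after a fixed canonical representative, which is the finite-memory-determinacy argument for multi-player parity games with hierarchical information of~\cite{MMS16}. Once the Even strategies are state-based they constitute a legal profile of $\Game_\varphi$, and the simulation of the previous paragraph read backwards certifies that it is winning; everything else is bookkeeping about the intra-round ordering of the players and the nestedness of the sets $\Dep_{\qntPref}(X_i)$.
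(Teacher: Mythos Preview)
Your approach is the paper's: identify Even-team strategy profiles with weak-behavioral Skolem functions, and let the deterministic parity automaton $\DAutomaton_\psi$ translate between the parity objective and truth of $\psi$. The paper dispatches the correspondence in one line (``the strategy tuples for the Even team correspond to the weak-behavioral Skolem functions over $\qntPref$ and so they generate the same set of outcomes''), whereas you unpack it with considerably more care. In particular, you isolate a point the paper glosses over: strategies in $\Game_\varphi$ have domain $\States^{*}\times(\Actions_1\times\cdots\times\Actions_{i-1})$, i.e.\ they see the \emph{state} history, while a weak-behavioral Skolem function sees the full prefix of joint actions, which may be strictly finer when distinct action sequences drive $\DAutomaton_\psi$ through the same state sequence. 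Your proposed fix --- pass through the full-information variant of the game and invoke determinacy of multi-player parity games to obtain a strategy that factors through the state information --- is the right way to make the paper's asserted correspondence rigorous. One minor adjustment: strategies in $\Game_\varphi$ already see the whole state \emph{sequence}, not just the current state $s_k$, so you do not need to descend all the way to positional strategies; any strategy whose move at round $k$ depends only on $s_0,\ldots,s_k$ together with the earlier players' round-$k$ actions is already a legal strategy of $\Game_\varphi$.
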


		\begin{proof}
		Observe that every player $i$ is associated to the set of actions $\Actions_{i}$ corresponding to the evaluation of variables in $X_{i}$.
		In addition, every set of existentially quantified variables is associated to a player whose index is even and so playing for the Even team in $\Game_{\varphi}$.
		Also, the ordering of player reflects the order in the quantification prefix $\qntPref$.
		
		In addition to this, note that the strategy tuples for the Even team correspond to the weak-behavioral Skolem functions over $\qntPref$ and so they generate the same set of outcomes over $(2^{X})^{\omega}$.
		
		Since the automaton $\DAutomaton_{\psi}$ accepts all and only those $\omega$-words on which $\psi$ is true, it follows straightforwardly that every weak-behavioral Skolem function $\theta$ over $\qntPref$ is such that $\theta(\pi) \cmodels \psi$ iff $\theta$ is a winning strategy for the Even team in $\Game_{\varphi}$.
		Hence, the \WBQLTL formula $\varphi$ is satisfiable iff $\Game_{\varphi}$ admits a winning strategy for the Even team.
	\end{proof}
		
	Regarding the computational complexity of \WBQLTL, consider that solving a multi-player parity game amounts to decide whether the Even team has a winning strategy in $\Game$.
	A precise complexity result is provided below.
	
	\begin{lemma}{\cite{MMS16}}
		\label{lmm:multi-player-parity}
		The complexity of solving a multi-player parity game $\Game$ is polynomial in the number of states and exponential in the number of colors and players.
	\end{lemma}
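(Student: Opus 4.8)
Since this is a known result~\cite{MMS16}, I sketch the reconstruction I would carry out. The plan is to reduce the question ``does the Even team have a winning strategy tuple in $\Game$?'' to solving an ordinary two-player parity game on an enlarged arena, exploiting the fact that the information available to the players is \emph{totally ordered}: every player observes the whole history of states visited (so the current state is common knowledge) and, within a round, sees the moves of all lower-indexed players, so that player $i$'s information always refines player $(i{-}1)$'s.

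First I would treat a single round of $\Game$ at a fixed state $s$ as a finite extensive-form game of \emph{perfect information}: the players move in the order $0,1,\dots,n$, each one seeing the choices made so far in that round, and the outcome of the round is the successor state $\trnFun(s,\cdot)$ together with its colour $\lambda$. Composing these round games along the infinite play and folding in the parity condition yields a product game whose state component still ranges over $\States$ but which additionally carries a bounded memory recording the part of the past interaction needed to enforce parity; the standard nested $\mu/\nu$ (rank/signature) analysis of parity keeps the size of this memory polynomial in the number $c$ of colours, while resolving the $n{+}1$ hierarchically nested move-orders within a round contributes a factor exponential in the number of players. On the resulting arena — of size polynomial in $\card{\States}$ and exponential in $c$ and in $\card{\Players}$ — one then runs any parity-game solver, which is polynomial in the number of states and exponential in the number of colours, yielding the claimed bound; it remains to check that a winning strategy tuple for the Even team in $\Game$ corresponds to a winning strategy in this product game and conversely.

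The delicate point, and the step I expect to be the main obstacle, is showing that the \emph{cross-round} imperfect information — a player being unable to reconstruct the past moves of higher-indexed opponents — does not force the tower-of-exponentials blow-up that a naive belief/subset construction iterated over the $n{+}1$ levels would incur. The way out is precisely that the whole state sequence is public: because the arena state is always known to everyone, beliefs about it are trivial, so the only thing that must be tracked across the levels is the bounded parity memory, which can be reused rather than re-powerset-ed at each level. Carrying out this reduction carefully, and verifying the back-and-forth correspondence of strategies, is the technical core of~\cite{MMS16}.
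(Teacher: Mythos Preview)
The paper does not prove this lemma; it simply imports it from~\cite{MMS16}, so there is no in-paper argument to compare against.

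Your overall plan---sequentialize each round in the order $0,1,\dots,n$ and solve the resulting turn-based two-player parity game between the Even and Odd teams---is the right one and does yield the stated bound. Two aspects of your execution are muddled, though. First, the ``delicate'' cross-round information issue you flag is dispatched by a one-line argument rather than by belief tracking or any ``parity memory'' baked into the arena. The sequentialized game, with positions of the form $(s,a_0,\dots,a_{i-1})$, is a genuine perfect-information two-player parity game, so by positional determinacy the winner has a strategy depending only on the current position. A positional choice at an Even-owned position with index $i$ uses only $(s,a_0,\dots,a_{i-1})$, which is already available to player $i$ in $\Game$ (indeed it uses even less than the full history $\States^{*}$ allowed there); hence the positional winning strategy decomposes directly into a legal winning tuple for the Even team in $\Game$, and the converse lifting is immediate. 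No subset construction, no iterated powersets, no extra memory enter the picture.

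Second, your size accounting is off. The sequentialized arena has $O\!\big(\card{\States}\cdot\prod_i\card{\Actions_i}\big)$ positions and the \emph{same} set of colours as $\Game$; there is no exponential-in-$c$ blow-up at the arena level, and no ``rank/signature memory'' is added to the states. The exponential dependence on the number of colours comes solely from the parity-game solving step applied afterwards, while the ``exponential in players'' is really the factor $\prod_i\card{\Actions_i}$ from sequentializing (exponential in $\card{\Players}$ under the implicit assumption of bounded action sets).
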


	Therefore, we can conclude that the complexity of \WBQLTL satisfiability is as stated below.
	
	\begin{theorem}
		\label{thm:wbqltl-satisfiability-complexity}
		The complexity of \WBQLTL satisfiability is 2EXPTIME-complete
	\end{theorem}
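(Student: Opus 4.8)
The plan is to get the upper bound directly by composing the reduction of Theorem~\ref{thm:wbqltl-satisfiability} with the size estimate of Lemma~\ref{lmm:ltl-to-parity} and the algorithmic bound of Lemma~\ref{lmm:multi-player-parity}, and to get the matching lower bound by a reduction from \LTL reactive synthesis.

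For the upper bound I would argue as follows. Given a closed \WBQLTL formula $\varphi = \qntPref \psi$ with $n+1$ quantification blocks, first build the deterministic parity word automaton $\DAutomaton_{\psi} = \tuple{2^{\Var}, Q, q_0, \delta, \alpha}$ of Lemma~\ref{lmm:ltl-to-parity}, which has $\card{Q} = 2^{2^{O(\card{\psi})}}$ states and $\card{C} = 2^{O(\card{\psi})}$ priorities; then form the multi-player parity game $\Game_{\varphi}$ as in the construction preceding Theorem~\ref{thm:wbqltl-satisfiability}, observing that it has $\card{Q}$ states, $\card{C}$ colors, and $n+1 = O(\card{\varphi})$ players. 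By Theorem~\ref{thm:wbqltl-satisfiability}, $\varphi$ is satisfiable iff the Even team has a winning strategy in $\Game_{\varphi}$, and by Lemma~\ref{lmm:multi-player-parity} this is decidable in time polynomial in $\card{Q}$ and exponential in $\card{C}$ and in the number of players. The bookkeeping is then routine: a polynomial in $2^{2^{O(\card{\psi})}}$ is still double exponential in $\card{\psi}$; an exponential in $\card{C} = 2^{O(\card{\psi})}$ is $2^{2^{O(\card{\psi})}}$, again double exponential; and the exponential $2^{O(\card{\varphi})}$ contributed by the number of blocks is only single exponential, hence dominated. Multiplying, the whole procedure runs in time doubly exponential in $\card{\varphi}$, which gives membership in 2EXPTIME.

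For the lower bound I would reduce from \LTL realizability, which is 2EXPTIME-hard~\cite{PR89}. Given an \LTL specification $\psi$ with environment variables $X$ and system variables $Y$, one checks, unwinding Definition~\ref{def:weak-behavioral-skolem}, that a weak-behavioral Skolem function over the prefix $\forall X \exists Y$ is precisely a strategy that fixes the system outputs at each step as a function of the whole play history together with the current input, i.e.\ a reactive implementation of $\psi$; hence by Definition~\ref{def:weak-behavioral-semantics} the single-block formula $\forall X \exists Y\, \psi$ is \WBQLTL-satisfiable iff $\psi$ is realizable. As this formula lies well inside \WBQLTL, 2EXPTIME-hardness transfers, and combining the two bounds yields 2EXPTIME-completeness.

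The point requiring care — and thus the main obstacle — is the complexity accounting in the upper bound: one must verify that taking an exponential in the number of colors does not push the bound up to triple exponential, which works out only because the number of colors, while single exponential in $\card{\psi}$, sits inside the outer exponential, and that the dependence on the number of quantifier blocks contributes merely another single-exponential factor that is absorbed. On the lower-bound side the only minor subtlety is aligning the input/output visibility convention built into the weak-behavioral semantics (full history plus current input) with a variant of \LTL synthesis known to be 2EXPTIME-hard; since the Moore and Mealy variants are both 2EXPTIME-complete, this causes no real difficulty.
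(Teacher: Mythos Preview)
Your proposal is correct and follows essentially the same approach as the paper's own proof: the upper bound composes Theorem~\ref{thm:wbqltl-satisfiability}, Lemma~\ref{lmm:ltl-to-parity}, and Lemma~\ref{lmm:multi-player-parity} with the same complexity accounting, and the lower bound is the same reduction from \LTL synthesis via the formula $\forall X \exists Y\, \psi$ citing~\cite{PR89}. Your write-up is in fact more explicit about the arithmetic and about the Moore/Mealy alignment than the paper's, but the argument is identical.
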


	\begin{proof}
	The procedure described in Theorem~\ref{thm:wbqltl-satisfiability} is 2EXPTIME.
	Indeed, the automata construction of Lemma~\ref{lmm:ltl-to-parity}, produces a game whose set of  states $\States$ is doubly-exponential in $\psi$ and a number of colors $C$ singly exponential in the size of $\psi$.
	Moreover, the number $n$ of players in $\Game_{\varphi}$ is bounded by the length of $\varphi$ itself, as it corresponds to the number of quantifiers in the formula.
	
	Now, from Lemma~\ref{lmm:multi-player-parity}, we obtain that solving $\Game_{\varphi}$ is polynomial in $\States$, and exponential in both $C$ and $n$.
	This amounts to a procedure that is double-exponential in the size of $\varphi$.
	
	Regarding the lower-bound, observe that the formula $\forall X \exists Y \psi$ represents the synthesis problem for the \LTL formula $\psi$ with $X$ and $Y$ being the uncontrollable and controllable variables, which is already 2EXPTIME-Complete~\cite{PR89}.
\end{proof}

\section{Related Work in Formal Methods}

The interaction of second-order quantified variables is of interest in the logic and formal method community.
For instance, Independence-Friendly logic considers dependence atoms as a syntactic extension~\cite{MSS11,GV13}.
Another approach generalizes quantification by means of partially ordered quantifiers~\cite{BG86,KM92} in which existential variables may depend on disjoint sets of universal quantification.

The notion of behavioral has recently drawn the attention of many researchers in the area of logic for strategic reasoning.
Strategy Logic~\cite{MMPV14} (\SL) has been introduced as a formalism for expressing complex strategic and game-theoretic properties.
Strategies in \SL are first class citizens.
Unfortunately, and similarly to \QLTL, quantifications over them sets up a kind of dependence that cannot be realized through actual processes, as they involve future and counter-factual possible computations that are not accessible by reactive programs.
To overcome this, and also mitigate the computational complexities of the main decision problems, the authors introduced a \emph{behavioral} semantics as a way to restrict the dependence among strategies to a realistic one.
They also showed that for a small although significant fragment of \SL, which includes \ATLS, behavioral semantics has the same expressive power of the standard one.
This means that ``\emph{behavioral strategies}'' are able to solve the same set of problems that can be expressed in such fragment.
Further investigations around this notion has been carried out in the community.
In~\cite{GBM18,GBM20}, the authors characterize different notions of behavioral, ruling out future and counter-factual dependence one by one, providing a classification of syntactic fragments for which the behavioral and non-behavioral semantics are equivalent.

\section{Conclusion}
\label{sec:conclusion}

We introduced a behavioral semantic for \QLTL, getting a new logic \emph{Behavioral \QLTL} (\BQLTL). This logic is characterized by the fact that the (second-order) existential
quantification of variables is restricted to depend, at every instant, only on the past interpretations of the variables that are universally quantified upfront in the formula, and not on their entire trace, as it is for classic \QLTL.
This makes such dependence to be a function ready implementable by processes, thus making \BQLTL suitable for capturing advanced forms of planning and synthesis through standard reasoning, as envisioned since in the early days of AI \cite{Green69}.
We studied satisfiability for \BQLTL, providing tight complexity bounds.
For the simplest syntactic fragments, which do not include quantification blocks of the form $\forall X_{i} \exists Y_{i}$, the complexity is the same as \QLTL, given the two semantics are equivalent. For the rest of \BQLTL, where the characteristics of behavioral semantics become apparent, we present an automata-based  technique that is $(n+1)$-EXPTIME, with $n$ being the number of quantification blocks $\forall X_{i} \exists Y_{i}$.
The matching lower-bound comes from a reduction of the corresponding (distributed) synthesis problems.

We also consider a weaker-version of Behavioral \QLTL, denoted \WBQLTL, where the history of quantification is completely visible to every existentially quantified variable, except for the current instant in which only the upfront quantification is available.
We give a technique for satisfiability that is $2$-EXPTIME, regardless of the number of quantifications in the formula.
This is due to the fact that full visibility of variables allows for solving the problem with a simple local reasoning that avoids computationally expensive automata constructions.
Also in this case, the matching lower-bound comes from a reduction of the corresponding synthesis problem, again proving that our technique is optimal.

\begin{paragraph}{Acknowledgments}
	This work is partially supported by ERC Advanced Grant WhiteMech (No. 834228) and the EU ICT-48 2020 project TAILOR (No. 952215). 
\end{paragraph}


\bibliographystyle{plain}
\bibliography{biblio}

%

\end{document}